\documentclass{article}
\usepackage[utf8]{inputenc}
\usepackage[margin=2.5cm]{geometry}
\usepackage{graphicx}
\usepackage{amsmath,amssymb}
\usepackage{booktabs}
\usepackage{natbib}
\usepackage{color}
\usepackage{soul}
\usepackage{verbatim}
\usepackage{bm}
\usepackage{lineno}
\usepackage{hyperref}
\usepackage[affil-it]{authblk}

\urlstyle{same}

\usepackage[margin=2.5cm]{geometry}
\usepackage{titlesec}
\usepackage{textcomp}
\usepackage{chngcntr}
\pagestyle{plain}
\usepackage{parskip}
\usepackage{wrapfig}
\usepackage{geometry}
\usepackage{multicol}
\usepackage{adjustbox}
\usepackage{float}
\usepackage[utf8]{inputenc}
\usepackage[T1]{fontenc} 					
\usepackage{graphicx}
\usepackage{caption}
\usepackage{subcaption}
\usepackage{graphicx}       					
\usepackage{amsmath,amssymb} 
\usepackage{xfrac}

\usepackage{amsthm}
\theoremstyle{definition}

\theoremstyle{theorem}

\newtheorem{thm}{Theorem}


\newcommand{\R}{\mathbb{R}}

\def\N{{\rm N}}

\def\lnorm{{\rm LogNormal}}

\def\fix{{\rm fix}}
\def\ibs{{\rm IBS}}
\def\bs{{\rm BS}}
\def\lp{{\rm LP}}
\def\bp{{\rm BP}}
\def\GP{{\rm GP}}
\def\GPt{{\rm GP-t}}
\def\hb{{\rm HB}}

\title{Combining predictive distributions for time-to-event outcomes \\ in meteorology}
\author[1,2*]{C\' eline Cunen}
\author[2]{Thea Roksv\aa g}
\author[2]{Claudio Heinrich‐Mertsching}
\author[2]{Alex Lenkoski}

\affil[1]{Oslo Centre for Biostatistics and Epidemiology, Oslo University Hospital, P.O.Box 1122 Blindern
0317 Oslo, Norway}
\affil[2]{Norwegian Computing Center, P.O.Box 114 Blindern
0314 Oslo, Norway. (roksvag@nr.no; claudio.heinrich@hotmail.de; lenkoski@nr.no)}
\affil[*]{Corresponding author: C\' eline Cunen, cmlcunen@uio.no}
\date{ }

\begin{document}

\maketitle

\begin{abstract}
\noindent
Combining forecasts from multiple numerical weather prediction (NWP) models have shown substantial benefit over the use of individual forecast products.  
Although combination, in a broad sense, is widely used in meteorological forecasting, systematic studies of combination methodology in meteorology are scarce.
In this article, we study several combination methods, both state-of-the-art and of our own making, with a particular emphasis on situations where one seeks to predict \textit{when} a particular event of interest will occur. Such time-to-event forecasts require particular methodology and care. 
We conduct a careful comparison of the different combination methods through an extensive simulation study, where we investigate the conditions under which the combined forecast will outperform the individual forecasting products.
Further, we investigate the performance of the methods in a case-study modelling the time to first hard freeze in Norway and parts of Fennoscandia.
\end{abstract}

\textbf{Keywords: }
forecast combination, 
frost modelling,
numerical weather prediction,
probabilistic forecasts,
survival analysis.


\section{Introduction}

 In many real forecasting situations one may have access to multiple forecasts based on different sources of information. In meteorology, for example, combining predictions from several numerical weather prediction (NWP) models frequently results in better predictions than any of the single models. In this article, we investigate how to best combine probabilistic weather forecasts, with a specific focus on time-to-event prediction. We also attempt to clarify, through simulations and a case-study, under what condition combination will be fruitful in the above sense, that the combined forecast is more skilful than any of the single-source forecasts.

Combination of weather forecasts from different sources has been studied be several authors, see for instance \citep{vislocky1995,kober2012,gneiting2013,baran2018,schaumann2020}. The terms combination, aggregation, merging and blending are used more or less interchangeably. Older works combine the forecasts by taking simple linear averages of the different forecast distributions \citep{vislocky1995, kober2012}, often giving equal weights to the different forecasts. Later contributions have explored more complex combination methods and typically let combination weights, and sometimes other parameters, be estimated from historical data \citep{gneiting2013, baran2018, schaumann2020}. In the field of econometrics, many similar methods have been developed and studied in parallel, see \citet{wang2022} for a recent review. 

Meteorological forecasts are different from forecasts in most other fields, both in terms of their origin and in terms of the form they are issued in. Unlike fields where forecasts are the result of statistical models on historical data, weather forecasts are produced from NWP models, i.e. physical models of the atmosphere and oceans \citep{baran2018}.

Further, meteorological forecasts usually come in the form of ensembles, where each ensemble member corresponds to a run of the model with different initial conditions and model physics \citep{baran2018}. The different ensemble members are typically considered as exchangeable in the statistical sense, and the ensemble may be interpreted as a random sample from the underlying forecasting distribution. In this article, ensembles from different types of NWP models constitute the different forecasts which we want to combine. 
This could for example be ensembles from different weather centres, 
forecasts with different release frequencies and lead times, or a mix of dynamic and statistical forecasts.
Our study will focus on the combination of seasonal and subseasonal forecasts.

Seasonal forecasts provide forecasts of weather conditions up to 5-7 months into the future, while subseasonal forecasts provide weather information for the next 2-3 months. Recent studies from Europe show that subseasonal and seasonal forecasts often have skill the first 3-4 weeks, particularly for temperature-based variables \citep{hyvar2020,ceglar2021,baker2023,roksvaag2023}, and occasionally skill for even longer lead times \citep{hyvar2020_b}. 

Many users want to use meteorological forecasts to anticipate \textit{when} a particular event of interest will occur. For example drought prediction \citep{yuan2013}, prediction of the onset of the next rainy season \citep{scheuerer2024}, predictions of the time-to-hard-freeze for agricultural application \citep{roksvaag2023} and prediction of the flowering time for wheat \citep{ceglar2021}. Time-to-event outcomes such as these should be handled with methods from \textit{survival analysis}. This subfield has deep roots in medical statistics, but is also frequently used in a diverse set of fields ranging from engineering to economics.  To our knowledge, \citet{roksvaag2023} is the first instance of the use of survival time methods in meteorology.  \citet{roksvaag2023} employed survival time methods based on the output of a single NWP model.  Our present work extends this methodology to consider how forecasts from multiple sources can be combined effectively.

Despite the relatively large literature on the topic of forecast combination, there exists only a few papers which have systematically studied and compared different combination methods in the context of meteorological forecasting. \citet{baran2018} is an exception. They compare several forecast combination methods, but are specifically focusing on combining two statistical distributions fitted to the \textit{same} forecast ensemble in the context of post-processing. 
In their conclusion they highlight that the combination methods they have studied may also be used to combine \textit{different} forecasts, and this is what we have addressed in the present paper. Through simulation experiments and a real data case-study, we  provide a thorough study of combination methods, particularly relevant for the combination of forecasts in meteorology. Our case-study and simulation experiment are time-to-event set-ups, but the conclusions we draw about the combination methods should generalise to other outcomes in meteorology. 

Our contribution in this paper is (1) a careful comparison of different combination methods, through an extensive simulation study and a real, interesting application; (2) two new combination methods; (3) application of (more) survival analysis tools to the weather prediction context. In the next section we present the real data application which will serve as an illustration and case-study throughout the paper. Then in Section~\ref{sec:methods} we describe the methods that will be used: the necessary survival analysis methodology, as well as the combination methods, and the framework for forecast evaluation. In Section~\ref{sec:illustration} we return to the application to demonstrate the use of the combination methodology in a small example. Section~\ref{sec:simulations} contains the simulation study, and Section~\ref{sec:application} the case-study. Finally we discuss our findings and conclude in Section~\ref{sec:conclusion}.

\section{Motivating the case-study}
\label{sec:motivation}

Throughout the paper we will consider a particular forecasting application, which we describe here and return to in Sections~\ref{sec:illustration} and \ref{sec:application}. The application also serves as an inspiration for the design of the simulation study in Section~\ref{sec:simulations}. Our goal in the case-study is to predict the time to the first hard freeze of the season, subsequently referred to as the time-to-hard-freeze.  Hard freeze is defined as a mean daily near-surface air temperature below 0 $^\circ$C. Such definitions should be made with the forecast user in mind. The definition of hard freeze was for example here coordinated with Norwegian farmers, as our study locations are in Norway and parts of Fennoscandia. In this region the first hard-freeze of the season typically comes some time after October 1.

Two ensemble forecast products for surface temperature with forecasts covering the relevant time period are available,
\begin{enumerate}
    \item the seasonal forecast: an ensemble of around 50 members with issue date September 1 and a maximal lead time of 122 days (December 31); 
    \item the subseasonal forecast: an ensemble of 11 members with issue date September 30 and a maximal lead time of 46 days (November 15).
\end{enumerate}
The two forecasting products will be referred to as our two sources of information in the following. 
The seasonal forecasts we use here were also considered in \citet{roksvaag2023}, where the authors demonstrate that seasonal forecasts alongside survival analysis methods can produce skilful forecasts of the time-to-hard-freeze. The question is now, can we obtain  an even more skilful forecast by combining the seasonal and subseasonal forecasts? 

\begin{figure}
\includegraphics[width =1.05\textwidth]{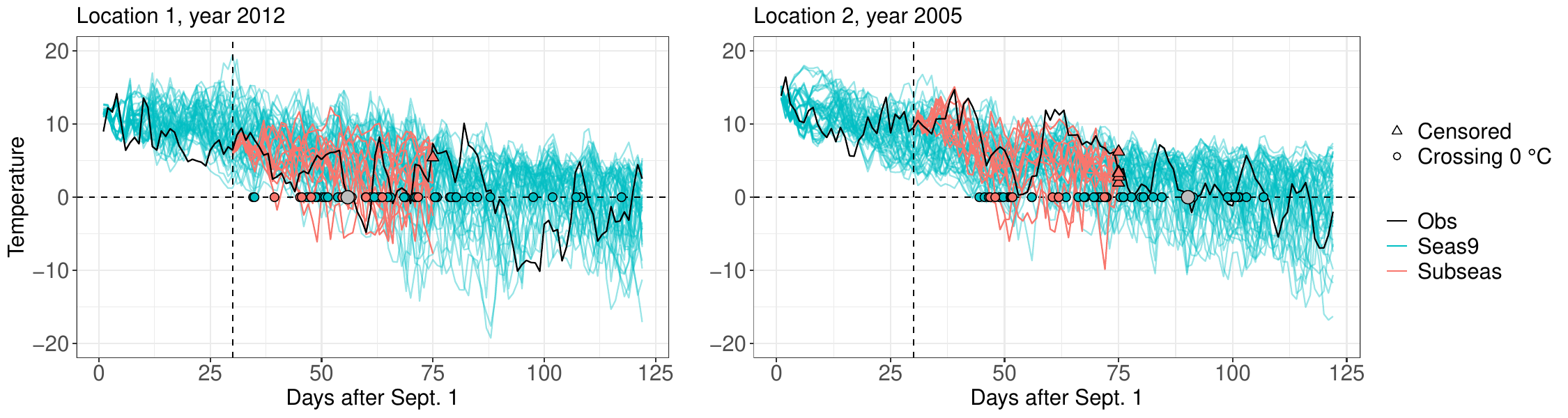}
\caption{Temperature trajectories for ensemble members of the seasonal forecast (Seas9), issued September 1, and the subseasonal forecast (Subseas), issued September 30 for two different locations in Fennoscandia for two different years. The observed temperature is shown in black (Obs). The first time each temperature trajectory crosses $0 ^\circ$ C is marked with a circle. Members that never cross zero before the maximum lead time is reached are marked with a triangle and are treated as censored in the survival analysis framework. \label{fig:motivation}}
 \end{figure}
 
Figure~\ref{fig:motivation} provides the observed temperature and forecasted temperature trajectories over the autumn season in two different years (and locations). Each coloured line represents the forecasted temperature trajectory from a particular ensemble member. 

From each forecasted temperature trajectory we obtain a forecast for the time-to-hard-freeze by recording the time when the trajectory first crosses 0 $^\circ$C. This results in an ensemble of forecasted times-to-hard-freeze from each of the two information sources. These time-to-event forecasts are shown as blue (seasonal) or red (subseasonal) dots in the figure. The actual observed time-to-hard-freeze is given as a grey dot. The figure illustrates that some trajectories will reach their maximal lead time without crossing the 0 $^\circ$C line. In the figure these instances are indicated by red triangles, and these time-to-event observations are considered to be \textit{censored}. The exact value of a censored time-to-event is not known; we only know that the event of interest will happen at some point after the censoring date. Censoring is one of the central challenges in survival analysis, as it is specific to this branch of probabilistic modelling.

In the two examples shown in Figure~\ref{fig:motivation} we only have censored observations in the subseasonal ensemble. This will often be the case in our case-study as well, due to the short lead time of the subseasonal forecast. Note however that we can have censoring in the seasonal ensemble too, and for some particular combinations of year and location we can even have censored realisations, meaning that hard freeze does not occur before the 31st of December (and the black line in the plots above does not cross below 0  $^\circ$C).

In our case-study we will assume that we are on the 1st of October and want to predict the time-to-hard-freeze, if hard freeze has not happened already. This date is indicated by a vertical, dashed line in Figure~\ref{fig:motivation}. The date was chosen out of convenience, and hard freeze happens after October 1st for most of the 103 study locations we treat in the case-study (see Section~\ref{sec:application}). On October 1st the seasonal forecast is already old (having been issued on September 1), but as we will see later, it still has some predictive skill, at least when combined with the subseasonal forecast.

The data sources for the observations and forecasts are described in detail in Appendix~\ref{app:DataSources}.

\section{Methods}
\label{sec:methods}

We will denote the variable we want to forecast as $T$, indicating a time-to-event, for example the time-to-hard-freeze in our motivating application.  
The combination methods we study require estimation in two steps. In the first step, each source is analysed separately to produce source-specific forecast distributions. In the second step, these source-specific forecast distributions are combined. This step typically requires the estimation of combination weights and potentially other combination parameters.
In the following section, we start by presenting methods we will use in the first step. These are standard methods from the field of survival analysis. Then, in Section~\ref{sec:combometh} we present the combination methods, most of which have utility in contexts beyond time-to-event studies.

\subsection{Single-source analyses}
\label{sec:survmeth}

Our information sources are meteorological forecasting products which come in the form of \textit{ensembles}. So for a particular forecasting situation (i.e. a particular time and place) and a specific information source $k$, we have access of to a sample of $n_k$ time-to-event forecasts which we will use to estimate a source-specific forecasting distribution $\widehat F_k(t)$, in the form of a cumulative distribution function (CDF). The hat highlights that $F_k(t)$ is estimated from the ensemble members, but we will sometimes omit the hats in the following for ease of notation. In survival analysis, it is typical to focus on the survival function $S_k(t)$ instead of the CDF, i.e. $S_k(t) = 1- F_k(t)$, and we will adhere to that convention in the following. The survival function gives the probability that the event of interest, eg. hard freeze, has not occurred before time $t$.

The survival function $S_k(t)$ can be estimated using either non-parametric or parametric tools. We will present some of these time-to-event tools here. 
The `observations' in the ensemble will be of the typical survival data form $(t_1, \delta_1),\dots,(t_{n_k}, \delta_{n_k})$ where $t_i$ is the possibly censored time-to-event and $\delta_i$ is the censoring indicator, with $\delta_i=1$ indicating that the event occurred at time $t_i$ and $\delta_i=0$ indicating that the observation was censored at time $t_i$.

The Kaplan-Meier estimator \citep{kaplan1958} is the standard non-parametric estimator of the survival function, 
\begin{align} \label{eq:KM}
\widehat S_k(t) = \prod_{t_i\leq t}\left(1 - \frac{d_{i}}{n_{i}}\right)
\end{align}
where \( t_i \) are the ordered event times, \( d_i \) is the number of events at time \( t_i \), and \( n_i \) is the number of individuals at risk just before time \( t_i \). This estimator provides a step-function estimate of the survival probability over time.

Alternatively, one may use parametric methods to estimate $S_k(t)$. In this paper we will only consider log-normal models, but there are of course many other options. The log-normal is typically parameterised by a mean $\xi$ and a variance parameter $\tau^2$ on the log-transformed scale. The CDF of the log-normal may be written as 
\begin{align} \label{eq:lognormal}
    F_k(t) = \Phi \left( \frac{\log(t) - \xi}{\tau} \right),
\end{align}
where  $\Phi(\cdot)$ is the standard normal CDF. We will estimate the source-specific parameters $(\xi,\tau)$ by either maximum likelihood or by minimizing the integrated Brier score (see Eq.~\eqref{eq:ibs}). In either case, we obtain a continuous $\widehat S_k(t)$ by plugging-in the estimated parameters into Eq.~\eqref{eq:lognormal} above. See Appendix~\ref{app:Snew} for a slightly more advanced method to get $\widehat S_k(t)$ from the estimated log-normal model. 

In a time-to-event set-up the log-likelihood takes the following form,
\begin{align} \label{eq:survML}
    \ell_{n_k}(\xi,\tau) = \sum_{\delta_i=0} \log S_k(t; \xi, \tau) +  \sum_{\delta_i=1} \log f_k(t; \xi, \tau),
\end{align}
where the first part consists of a sum over the censored observations, and the second part over the non-censored observations. In our case $S(t; \xi, \tau)$ will be the log-normal survival function and $f(t; \xi, \tau)$ the log-normal density. This allows us to estimate $(\xi,\tau)$ taking into account the censored observations.

\subsection{Combination methods}
\label{sec:combometh}

Here we present methods for the second step of the combination procedure, where the source-specific forecasting distributions are combined to obtain a combined forecasting distribution, $S_c(t)$. We present 
four combination methods, which we denote by LP, BP, GP and HB (acronyms to be explained below). The first two, LP and BP, are state-of-the-art methods in the literature, while the two next ones are our own contributions. 
We will present the methods for the case where one has two sources of information, but it is straightforward to generalise to more than two sources. From the source-specific analyses in the first step (Section~\ref{sec:survmeth}) we obtain two estimated survival functions  $\widehat S_1(t)$ and $\widehat S_2(t)$, which may be either step-functions or continuous as we have seen above. From these we easily have the estimated CDFs $\widehat F_1(t)$ and $\widehat F_2(t)$ which we will use in the combination step. Note that we omit the hats on the single-source CDFs below, for ease of notation. 

\subsubsection{Linear pooling (LP)}
Linear pooling combines $F_1(t)$ and $F_2(t)$ by a taking a simple weighted sum,
\begin{align}\label{eq:LPmethod}
    S^{\lp}_c(t) = 1-[ \omega F_1 (t) + (1-\omega)F_2(t)].
\end{align}
The combined survival curve $S^{\lp}_c(t)$ depends on a weight $\omega$ which reflects the quality of source 1 relative to source 2. The parameter $\omega$ is usually estimated from training data, i.e. past forecast-observation pairs, and this estimation step will be explained in Section~\ref{sec:estimation}. Sometimes it is instead simply set to $\omega=0.5$.

\citet{gneiting2013} studied this combination method in some detail, and proved that if the two single-source forecasts are calibrated then the combined forecast $S_c(t)$ will be overdispersed.  Thus $S_c(t)$ will not be calibrated, and this is a consequence of the LP failing to be \textit{flexibly dispersive} in the terminology of \citet{gneiting2013}.

The linear pool method is in widespread use within econometrics, see \citet{aastveit2018} and \citet{hall2007}, and has also been used in meteorological contexts, see for instance \citet{baran2018} and \citet{schaumann2020}.

\subsubsection{Beta pooling (BP)}

\citet{gneiting2013} proposed the beta-transformed linear pool, which we will simply call the beta pooling method, or BP. 
Here the single-source CDFs are combined by the following formula,
\begin{align} \label{eq:BP}
   S^{\bp}_c(t) = 1-B_{ \alpha,  \beta}[ \omega F_1 (t) + (1- \omega) F_2(t) ]. 
  \end{align}
where $B_{\alpha,\beta}(\cdot)$ is the beta CDF with parameters $\alpha$ and $\beta$. Note that if $\alpha=\beta=1$ we have $S^{\bp}_c(t) = S^{\lp}_c(t)$, because in that case the beta distribution is equal to a uniform.
The combination parameters $(\alpha, \beta, \omega)$ have to be estimated from training data, see Section~\ref{sec:estimation}.
\citet{gneiting2013} prove that the BP is flexibly dispersive, and it will therefore produce a combined forecast $S^{\bp}_c(t)$ which is calibrated, even when both single-source forecasts also are calibrated.

The BP is in less widespread use than LP, but is for example investigated in \citet{baran2018}, and extended in \citet{bassetti2018}.

\subsubsection{Gaussian pooling (GP)}

The Gaussian pooling, or GP, combines the single-source CDFs in the following way,
\begin{align} \label{eq:GP}
   S^{\GP}_c(t) = 1-\Phi\left [\frac{ \omega\Phi^{-1}\{F_1 (t)\} + (1- \omega) \Phi^{-1}\{F_2(t)\} - \mu}{ \sigma} \right]. 
\end{align}
where $\omega$ is the weight given to source 1, and $\Phi(\cdot)$ and $\Phi^{-1}(\cdot)$ are the standard normal CDF and quantile function respectively. Here we have three combination parameters $(\mu, \sigma, \omega)$, which again must be estimated from training data. The role of the combination parameter will be investigated in the illustration in Section~\ref{sec:illustration} and in the simulation study.

 A version of this method, without the parameters $\mu$ and $\sigma$, has been considered in other parts of the literature, for example very briefly in \citet{gneiting2013} under the umbrella of generalised linear combination formulas.  
 Similarly to the BP method, the three-parameter GP is flexibly dispersive, see Theorem~\ref{th1} in the Appendix.
We find that the combination parameters in the GP have a more straightforward interpretation, than the combination parameters in BP.

Further, the GP has a very natural extension to situations where there the training dataset is small. 
Then, it is natural to use the following student-t version of GP, which we will call GP-t
 \begin{align} \label{eq:GPt}
   S^{\GPt}_c(t) = 1-G_{n-1}\left [\frac{ \omega\Phi^{-1}\{F_1 (t)\} + (1- \omega) \Phi^{-1}\{F_2(t)\} -  \mu}{ \sigma} \right],
\end{align}
 where $G_{n-1}(\cdot)$ is the CDF of the student-t distribution with $n-1$ degrees of freedom, and $n$ is the number of training samples.

 \subsubsection{Hazard blending (HB)}

The previous three combination methods combine the \textit{CDFs} from the two sources, while hazard blending, HB, works on the \textit{hazard} scale,
  \begin{align*}
   S^{\hb}_c(t) = \prod_{t_i \leq t} (1-\lambda_i), \text{ with } \lambda_i=\frac{ \omega d_{1,i} + (1-  \omega) d_{2,i}}{\omega n_{1,i} + (1-  \omega) n_{2,i}}.
  \end{align*}
Here $d_{1,i}$ is the number of ensemble members from source 1 for which the event of interest occurred at time $t_i$, while $n_{1,i}$ is the number of ensemble members from source 1 for which the event has not yet occurred at time $t_i$. The numbers 
$d_{2,i}$ and $n_{2,i}$ have similar roles, for source 2.  

This combination method is inspired by the long-established Kaplan-Meier estimator, see Eq.~\eqref{eq:KM}, which is the standard non-parametric estimator of a survival function in the presence of censored observations.  HB is thus a non-parametric combination method, which does not require any parametric model for $F_1$ and $F_2$. HB will always result in a survival function $S^{\hb}_c(t)$ in the form of a step-function, while the three other combination methods may result in either step-functions or continuous functions depending on the analyses in the first step of the combination procedure (Section~\ref{sec:survmeth}).

We will refer to hazard blending (HB) and linear pooling (LP) as \textit{simple} combination methods, in the sense that they only have a single combination parameter $\omega$. The $\omega$ in HB and LP cannot be directly compared, however, since they belong to different scales. 
In contrast to LP and HB, we will refer to beta pooling (BP) and Gaussian pooling (GP) as \textit{complex} combination methods, since they have three combination parameters each.

\subsection{Forecast evaluation}
\label{sec:eval}


In our simulation study and in the real data application, we will evaluate the forecasts, either from single-sources or from combinations, with respect to sharpness under calibration, as is commonly done \citep{gneiting2007}. 
Calibration will be assessed using the probability integral transform (PIT) histograms (see for instance \citet{thorarinsdottir2018}), and also comparing the mean and standard deviation of the PIT values with 0.5 and $\sfrac{1}{\sqrt{12}}$ respectively, which should be their values under perfect calibration.

Predictive performance will be assessed by the integrated Brier score (IBS). If we have $m$ realised times-to-event on which to evaluate our forecasts, the Brier score \citep{brier1950} takes the form
\begin{align*}
    \bs(t) = \frac{1}{m}\sum_{i=1}^m [ \mathbb{I}\{T_i > t \} - S_{i}(t) ]^2,
\end{align*}
where $S_i(t)$ is the forecast which we want to evaluate and $T_i$ is the observed time-to-hard-freeze. The Brier score is a function of time $t$ so in order to obtain a single numerical value we `integrate' the score over time $t$. In practice, we will follow \citet{roksvaag2023} and discretise the timespan into days and sum from one to $t_{\max}$, the maximum number of days into the future we consider,
\begin{align} \label{eq:ibs}
    \ibs =  \sum_{t=1}^{t_{\max}} \bs(t) =\sum_{t=1}^{t_{\max}} \frac{1}{m}\sum_{i=1}^m [\mathbb{I}\{t_i > t \} - S_{i}(t)]^2.
\end{align}
In the case-study we use $t_{\max}=120$. The IBS is equivalent to the continuous ranked probability score (CRPS). 

\subsection{Estimation of combination parameters}
\label{sec:estimation}
The combination methods contain combination parameters, the weight $\omega$ and potentially others, which are typically estimated from a number of available forecast-observation pairs. For ease of exposition, we assume here that we have these training data from a number of years at a \textit{single} location. The training dataset then consists of (1) $n$ past time-to-event realisations  at the particular location $(t_i: i=1,\dots,n)$, and (2) past ensemble forecasts from each of the two sources for each year $i$. 
We will consider two estimation methods, one based on maximising the likelihood and the other on minimizing the IBS.

Note that in our case-study and also in other time-to-event applications we may have censored \textit{realisations}, in addition to the censored ensemble members which were described in Section~\ref{sec:survmeth}. These censored realisations need to be accounted for in the estimation of the combination parameters, specifically the log-likelihoods below need to be amended in a similar way as the log-likelihood in Eq.~\eqref{eq:survML}.

\subsubsection{Maximum likelihood (ML)}
\label{sec:combML}

The parameters of the combination methods, LP, BP and GP, can be estimated by maximising the likelihood of the combination formula over the training data. This may for example look like the following for LP,
\begin{align*}
    \ell(\omega) = \sum_{i=1}^n \log[ \omega f_{1,i}(t_i) + (1-\omega) f_{2,i}(t_i) ],
\end{align*}
where $f_{1,i}(\cdot)$ and $f_{2,i}(\cdot)$ are the forecast densities for year $i$ from the two sources. 

For the BP method the log-likelihood can be found  in \citet{gneiting2013} or by differentiating Eq.~\eqref{eq:BP}.
For the GP method differentiating Eq.~\eqref{eq:GP} results in the following log-likelihood
\begin{align*}
    \ell(\omega,\mu,\sigma) = 
    \sum_{i=1}^n \log \{ \phi\left( \frac{\omega\Phi^{-1}\{F_{1,i} (t_i)\} + (1-\omega) \Phi^{-1}\{F_{2,i}(t_i)\}- \mu}{\sigma} \right) 
    \frac{1}{\sigma} &[  \omega\Phi^{-1}{}'\{F_{1,i} (t_i)\}f_{1,i}(t_i) + \\ &(1-\omega) \Phi^{-1}{}'\{F_{2,i}(t_i)\}f_{2,i}(t_i) ]  \},
\end{align*}
where $\phi(\cdot)$ is the standard normal density function, $\Phi^{-1}{}'(\cdot)$ denotes the derivative of the standard normal quantile function (which may not have an analytical expression, but is simple to evaluate numerically). $F_{1,i}(\cdot)$ and $F_{2,i}(\cdot)$ are the forecast CDFs and $f_{1,i}(\cdot)$ and $f_{2,i}(\cdot)$ the forecast densities for year $i$ for each of the two sources. 

In the context of combination, this estimation method is used in for example \citet{gneiting2013} and in \citet{baran2018}.

\subsubsection{Minimising IBS}
\label{sec:minIBS}

Instead of ML, one may find the combination parameters by numerically optimising other loss functions, for instance the IBS. For example, for the HB method we would minimise
\begin{align*}
    \ibs(\omega) = \sum_t \frac{1}{n}\sum_{i=1}^n (\mathbb{I}\{t_i > t \} - S^{\hb}_{c,i}(t))^2,
\end{align*}
with respect to $\omega$. Here $S^{\hb}_{c,i}(t)$ is the predictive survival curve from year $i$ and $t_i$ is the realised time-to-hard-freeze in year $i$. This method may be used for LP, GP and BP too. This estimation method was used in the context of combination in for example \citet{baran2018}.

\section{Illustrating combination}
 \label{sec:illustration}

For concreteness we will show the combination methods at work in a specific example. Assume we are in the location of Torsby in Sweden on the 1st of October 2018 and we want to predict the time-to-hard-freeze. As explained in Section~\ref{sec:motivation} we have two forecasting products which will be our two sources of information: the seasonal and subseasonal forecasts.

This is the same data which we will use in the case-study in Section~\ref{sec:application}, but here we only look at one specific location.

We have 19 forecast-observation pairs for the years between 1999 and 2017 -- these will be our training data, from which we will estimate combination parameters.
In Figure~\ref{fig:ex1} we see the forecasts from the two sources for some of the training years. In some years the two sources are similar, other years they differ substantially. Sometimes the seasonal forecast is better (for example in 2010), while for other years the sub-seasonal is closest to the realised time-to-hard-freeze (for example in 2011). We see that some of the ensemble members in the subseasonal forecast are censored in 2010: 3 of the 11 members predict that frost will come after day 46, i.e. some time after mid November, beyond the lead time of the subseasonal forecast. 

 \begin{figure}[h] 
\begin{center}
\includegraphics[scale=0.55]{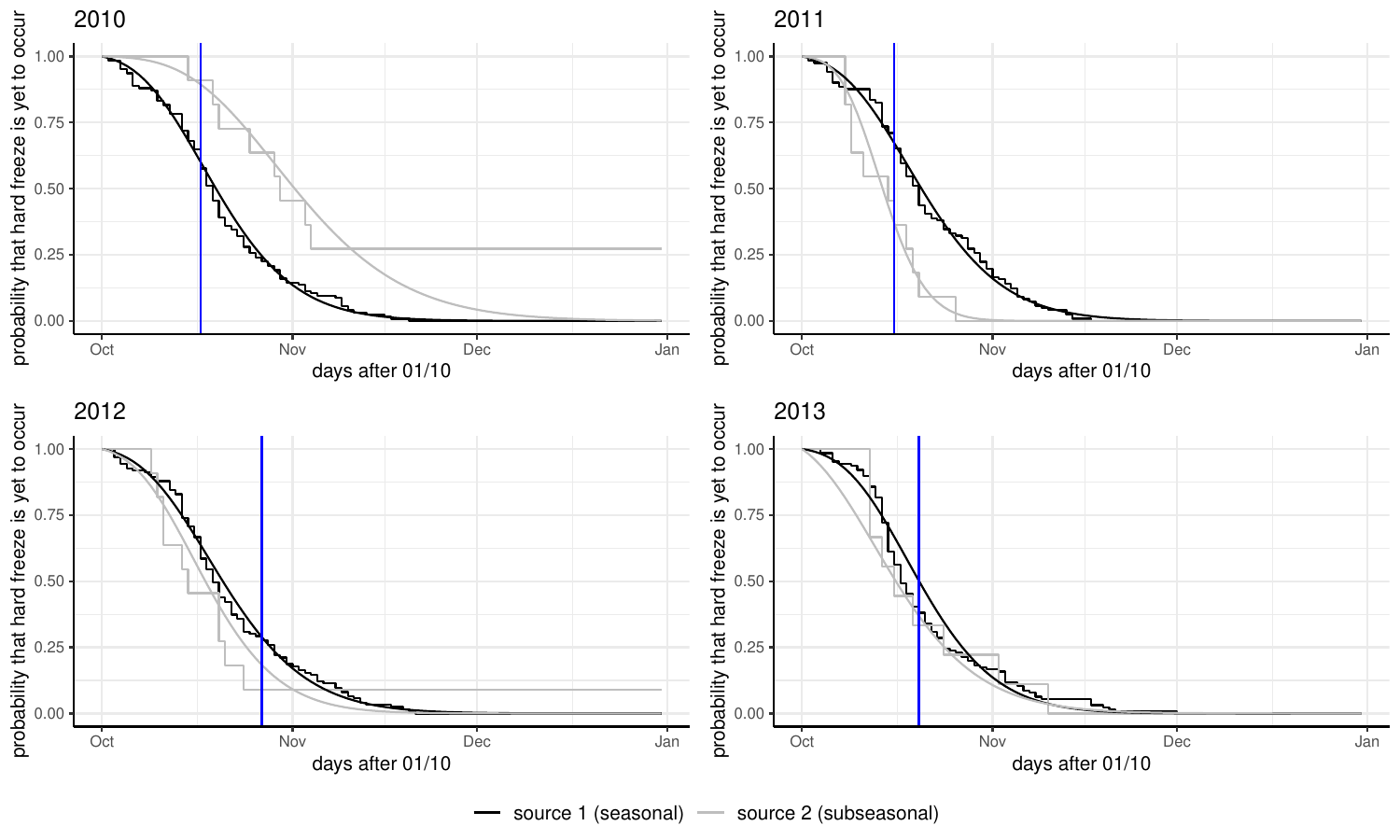}

\end{center}
\caption{\small Four years of the training data (past forecast-observation pairs). The vertical blue line marks the realised time-to-hard-freeze in each year. The grey and black step-functions are the KM estimators for the subseasonal forecast (grey) and seasonal forecast (black). The smooth functions in the same colours are log-normal models fitted to these ensembles.
}
\label{fig:ex1}
\end{figure}

In the first step of the combination procedure, we fit parametric log-normal models to the single-source forecasts in each year, as described in Section~\ref{sec:survmeth}. The resulting continuous survival curves will be used in the next step of the combination procedure by some of the combination methods (LP, BP and GP in this case). These parametric survival curves are shown as smooth lines in Figure~\ref{fig:ex1}, alongside the step-functions which are the KM estimators for each of the two sources. 
We see that the log-normal models fit the ensemble data reasonably well. 

In the second step of the combination methods, we estimate the combination parameters based on the 19 years of training data. Here we have used maximum likelihood for LP, BP, and GP and minIBS for HB. We obtain the following estimated combination parameters:
\begin{align*}
    &\widehat \omega_{{\text LP}} = 0.49 \quad \text{(weight on seasonal)}\\
    &\widehat \omega_{{\text BP}} = 0.49 \quad \hat \alpha= 1.63 \quad \hat \beta = 1.21\\
    & \widehat \omega_{{\text GP}} = 0.79 \quad \hat \mu= 0.22 \quad \hat \sigma =0.92 \\
    &\widehat \omega_{{\text HB}} = 0.09 
\end{align*}
Note again that $\omega$ for LP and HB are on different scales. The one from LP is more directly interpretable and indicates that the two sources are considered to be approximately equally good, on average, for this location in the years 1999 to 2017. The additional combination parameter in the GP method, $\mu$ and $\sigma$, can correct potential biases and over- or under-dispersion in the combined forecast. Here $\mu$ is relatively close to 0, so there does not seem to be much bias in the combined forecast. The standard deviation $\sigma$ is a bit below 1, which is a typical value when both single-source forecasts are close to calibrated. 
The parameters $\alpha$ and $\beta$ in BP play a similar role as $\mu$ and $\sigma$, but are a bit less directly interpretable. The further they are from (1,1) the more they correct the mean and variance of the combined forecast.

 \begin{figure}[h] 
\begin{center}
\includegraphics[scale=0.45]{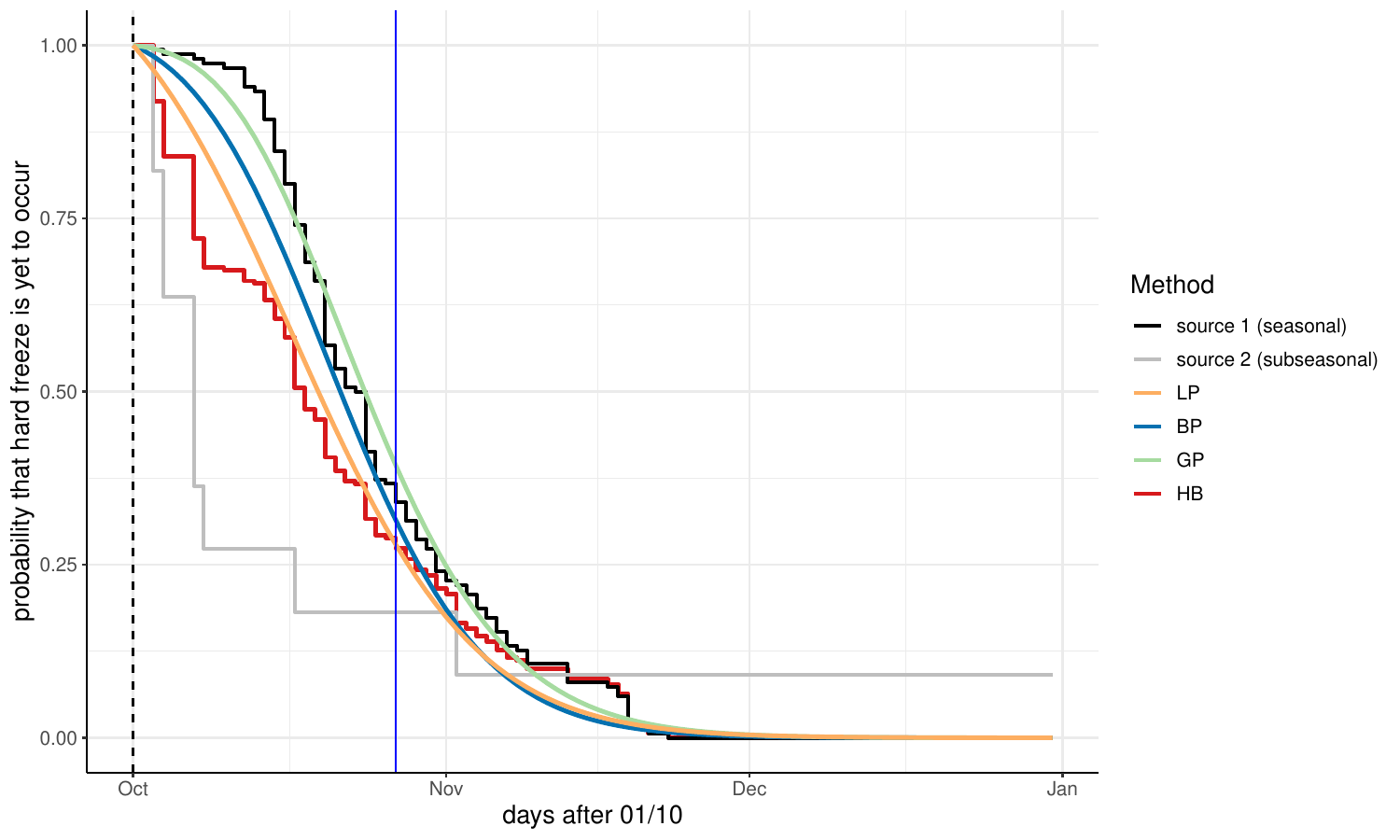}
\end{center}
\caption{\small Forecasts for 2018. The grey and black step-functions are the KM estimators for the subseasonal forecast (grey) and seasonal forecast (black). The coloured survival curves are the resulting predictive survival curves from the four combination methods. The vertical blue line marks the realised time-to-hard-freeze.
}
\label{fig:ex2}
\end{figure}

In Figure~\ref{fig:ex2} we see that the predictive survival curves from the combination methods tend to lie between the survival curves of source 1 and source 2. This will typically be the case, but for BP and GP the combined survival curve can lie outside the two single-source survival curves -- in cases where there is bias in one or both sources.  HB produces a step function. The others produce smooth curves since we used the parametric single-source survival curves in this case. LP and HB are otherwise quite similar and that is typical in our experience.
Compared to the other methods, GP is closer to the seasonal here -- as we also see from the estimated weight $\omega_\GP$. The realised time-to-hard-freeze in 2018 was in late October.

\section{Simulation study}
\label{sec:simulations}

Here we study the four combination methods in controlled conditions. We strive to mimic our application so we generate time-to-hard-freeze observations in a single location, over a number of years indexed by $i$. In each year we have access to two different forecasts from two sources of information: source 1 and 2. The forecasts are issued as ensembles, with $n_1$ samples from source 1 and $n_2$ samples from source 2, from which we estimate single-source survival curves $S_{1,i}(t)$ and $S_{2,i}(t)$ for each year $i$. Then, we estimate combination parameters across many years.

\begin{table}[ht]
\caption{\small Probabilistic forecasts compared in the simulation experiment. Source-specific estimation can be either parametric (ML) or non-parametric. Combination estimation can be by ML or by minimising the IBS. The last column indicates the number of combination parameters. }
\centering \small
\begin{tabular}{cccc} \toprule
    Method & Source-specific est. & Combination est. & No. combo. params. \\ \midrule 
  \textbf{source 1} & \textbf{ML} (log-normal) &  - &  -  \\ 
  \textbf{source 2} & \textbf{ML} (log-normal) &  - &  -  \\ 
  source 1 & KM (non-parametric) &  -  &  - \\ 
  source 2 & KM (non-parametric) &  - &  -  \\ 
  \textbf{LP} & \textbf{ML} (log-normal) & \textbf{ML} &  \textbf{1} \\
  \textbf{BP$_3$} & \textbf{ML} (log-normal) & \textbf{ML} &  \textbf{3} \\
  \textbf{GP$_3$} & \textbf{ML} (log-normal) & \textbf{ML} &  \textbf{3}\\
  \textbf{HB} & \textbf{non-parametric} & \textbf{min IBS} &  \textbf{1} \\
  LP$_{\ibs}$ & ML (log-normal) & min IBS &  1 \\
  BP$_{\ibs}$ & ML (log-normal) & min IBS &  3\\ 
  GP$_{\ibs}$ & ML (log-normal) & min IBS &  3\\
  BP$_2$ & ML (log-normal) & ML with $\alpha=\beta$ &  2 \\
  GP$_1$ & ML (log-normal) & ML with $\mu=0$ and $\sigma=1$ &  1 \\
  GP$_2$ & ML (log-normal) & ML with $\mu=0$ &  2\\
  \textbf{GP$_3$-t} & \textbf{ML} (log-normal) & \textbf{ML} &  \textbf{3}\\ \midrule
  LP$_0$ & - & - &  0\\
  merge & ML (log-normal) & - &  0\\ \bottomrule
\end{tabular}
\label{tab:sim1}
\end{table}

We compare the performance of a total of 17 different probabilistic forecasts (in the form of survival curves),  see Table~\ref{tab:sim1} where all methods are listed. In the first step of the combination procedure, the single-source forecasts may be estimated non-parametrically, using the Kaplan-Meier (KM) estimator, or using maximum likelihood (ML) with the true parametric model (the log-normal distribution, see Section~\ref{sec:simframe}). In this simulation set-up, combining the parametric curves will always be more advantageous (since the true model is known) and we have therefore not pursued the non-parametric option here, except for the HB method which is always non-parametric. 

In the second step of the combination procedure we can investigate two different ways to estimate the combination parameters: maximum likelihood (for LP, GP and BP), as in Section~\ref{sec:combML}, or minimising the IBS, as explained in Section~\ref{sec:minIBS}. 

We include four different versions of the GP method, and two versions of BP. The methods denoted by GP$_3$ and BP$_3$ in Table~\ref{tab:sim1} are the default versions of the methods as described in Section~\ref{sec:combometh}, each with three combination parameters to estimate. GP$_1$ and GP$_2$ are natural simplifications of GP$_3$: for GP$_1$ the only parameter left to estimate is $\omega$; the mean is fixed to 0 and the variance is fixed to 1. While for GP$_2$ only the mean parameter is fixed, leaving two combination parameters to estimate ($\omega$ and $\sigma$). BP$_2$ is a similar simplification of BP$_3$ where the beta parameters are constrained to be equal, which fixes the mean of the beta distribution to 1/2.
The method denoted by GP$_3$-t is the student-t version of the ordinary GP, as presented in Eq.~\eqref{eq:GPt}.

Finally, we also include two natural benchmark forecasts. The first one, LP$_0$, is simply the linear pool with the combination weight fixed at 0.5. So this methods requires no combination parameters to be estimated, and always gives equal weights to the two sources. The second one, `merge', consists of simply concatenating or merging the two datasets and then fitting a log-normal model to this single, larger dataset.    

The complete evaluation results are presented in the appendix, but the results for the methods in bold-face in Table~\ref{tab:sim1} will be given particular attention in Section~\ref{sec:simresults} below.

\subsection{Simulation model}
\label{sec:simframe}

We have considered the simulation framework carefully in order to have a suitable benchmark for comparing combination methods in a time-to-event setting. In particular, we have taken care to construct the framework so that the two sources contain different amounts of information, but that they both can produce calibrated forecasts.

The true realised time-to-hard-freeze in year $i$ is assumed to come from a log-normal distribution, with mean $\xi_{i}$ and variance $\tau_0^2$ on the log-transformed scale. Here we let  $\xi_i$ be a function of two variance components $x_{1,i}$ and $x_{2,i}$ which are realisations from normals with mean 0 and variances $\tau_1^2$ and $\tau_2^2$,
\begin{align*}
    &\text{Truth: } T_{i} | (x_{1,i},x_{2,i}) \sim \lnorm(\xi_i, \tau_0^2) \\
    &\text{with } \xi_i = \xi_0 + x_{1,i} + x_{2,i}, \\
    &\text{and } X_{1,i} \sim \N(0,\tau_1^2) \quad X_{2,i} \sim \N(0,\tau_2^2).
\end{align*}
The random variables $X_{1,i}$ and $X_{2,i}$ reflect some characteristic of each year: for example if both $x_{1,i}$ and $x_{2,i}$ are small there is a high chance of early frost in year $i$, while if both are large there is a higher chance of late frost.

For every year $i$, we generate two sets of forecasts which aim at predicting $T_i$: $n_1$ samples from source 1 and $n_2$ samples from source 2.
The data in source 1 are generated with knowledge of $x_{1,i}$, but not of $x_{2,i}$, while source 2 knows $x_{2,i}$, but not $x_{1,i}$. The log-normal has the property that the data from each source is still log-normally distributed,
\begin{align} \label{eq:sources}
    &\text{source 1: } Y_{i,j} | x_{1,i}  \sim \lnorm(\xi_0 + x_{1,i}, \tau_0^2 + \tau_2^2), \\
    &\text{source 2: } Z_{i,j} | x_{2,i}  \sim \lnorm(\xi_0 + x_{2,i}, \tau_0^2 + \tau_1^2) \nonumber
\end{align}
for $j=1,\dots,n_1$ for source 1 and $j=1,\dots,n_2$ for source 2.
The relative quality of the two sources depends on the sizes of $\tau_0$, $\tau_1$ and $\tau_2$. Intuitively, if $\tau_2 > \tau_1$ and large compared to $\tau_0$, then source 2 is much more informative than source 1.

Further, we let the two sources have different lead times, which translates into different censoring rates for the samples in the ensembles (see Section~\ref{sec:motivation}). In line with our application in Section~\ref{sec:application} we let source~(1) have a long time, equal 120 days into the future, leading to a low censoring rate. While source~(2) has a short lead time of 60 days, and therefore a higher censoring rate.

\subsection{Scenarios}
\label{sec:simscen}

We consider a total of 16 simulation scenarios, see Table~\ref{tab:sim2}. 
The number of training observations, i.e. the number of past forecast-observation pairs $n$, can be expected to have a large influence on the estimation of the combination parameters.
Scenarios 1 to 8 are idealised scenarios where the combination parameters are estimated based on a large number of training years, $n=1000$.  
 The training data consists of realised times-to-frost as well as forecasts from the two sources belonging to each realisations. From this training sample we estimate the combination parameters. Then, we generate a large ($10^4$) test sample on which the single-source and combined forecasts are evaluated. 
Scenarios 9 to 16 are more realistic in the sense that we have a small number of training years, $n=20$, from which the combination parameters are estimated. Each set of combination parameters is then evaluated on a single test dataset, and this whole procedure is repeated a large ($10^4$) number of times. 

\begin{table}[ht]
\caption{\small Simulation scenarios. }
\centering \small
\begin{tabular}{c|cccc} \toprule
    Scenario & No. of training years & Balanced sources? & Bias? & Ensemble sizes \\ \midrule 
  1 & $n=1000$ &  yes & no & $n_1=100$, $n_2=20$
  \\ 
  2 & $n=1000$  &  yes & no & $n_1=n_2=20$  \\ 
  3 & $n=1000$ &  yes & yes & $n_1=100$, $n_2=20$  \\ 
  4 & $n=1000$  &  yes & yes & $n_1=n_2=20$  \\ 
  5 & $n=1000$ & no & no & $n_1=100$, $n_2=20$ \\
  6 & $n=1000$ & no & no & $n_1=n_2=20$ \\
  7 & $n=1000$ & no & yes & $n_1=100$, $n_2=20$ \\
  8 & $n=1000$ & no & yes & $n_1=n_2=20$ \\
  9 & $n=20$ &  yes & no & $n_1=100$, $n_2=20$ \\ 
  10 & $n=20$ &  yes & no & $n_1=n_2=20$  \\ 
  11 & $n=20$ &  yes & yes & $n_1=100$, $n_2=20$  \\ 
  12 & $n=20$ &  yes & yes & $n_1=n_2=20$ \\ 
  13 & $n=20$ & no & no & $n_1=100$, $n_2=20$ \\
  14 & $n=20$ & no & no & $n_1=n_2=20$ \\
  15 & $n=20$ & no & yes & $n_1=100$, $n_2=20$ \\
  16 & $n=20$ & no & yes & $n_1=n_2=20$ \\\bottomrule
\end{tabular}
\label{tab:sim2}
\end{table}

The scenarios also differ with respect to characteristics of the two sources. For some scenarios we let both sources have small samples, $n_1=n_2=20$, while in the others we strive to mimic the application in Section~\ref{sec:application} and let source 1 have a large ensemble ($n_1=100$), while source 2 has a small one ($n_2=20$). Note here that source 2 also has a higher censoring rate in all scenarios, so the effective sample size is even smaller. Further, we sometimes let the two sources be 
balanced in terms of information content, while in other scenarios we let the sources be 
very unbalanced. In the first case we have $\tau_0=\tau_1=\tau_2=0.4$,  and in the second case we let source 2 have much less information than source 1, with   $\tau_0= 0.53$, $\tau_1=0.4$, and $\tau_2=0.2$. 
Finally, we include the possibility of sources (2) producing biased forecasts. In these scenarios we have
\begin{align*} \label{eq:sources}
    &\text{source 2: } Z_{i,j} | x_{2,i}  \sim \lnorm(\xi_0 + x_{2,i} - 0.5, \tau_0^2 + \tau_1^2),
\end{align*}
and source 1 as before.

 \subsection{Results}
\label{sec:simresults}
As stated in Section~\ref{sec:eval} we will evaluate forecasts with respect to sharpness under calibration. Calibration is assessed by PIT histograms
and by computing the mean and standard deviation of the PIT values.
The predictive performance of the forecasts is assessed by the average integrated Brier score (IBS) across all test observations. With the IBS we use $t_{\max}=120$ here, see Eq.~\eqref{eq:ibs}.

We start by presenting the results from the idealised scenarios with a very large training dataset ($n=1000$), before we present the results for the more realistic scenarios where the training dataset is small ($n=20$). In the figures below, we present the results for a selection of the methods listed in Table~\ref{tab:sim1}. The complete evaluation results are provided in the appendix. 

\subsubsection{Large training dataset}

 \begin{figure}[h] 
\begin{center}
\includegraphics[scale=0.45]{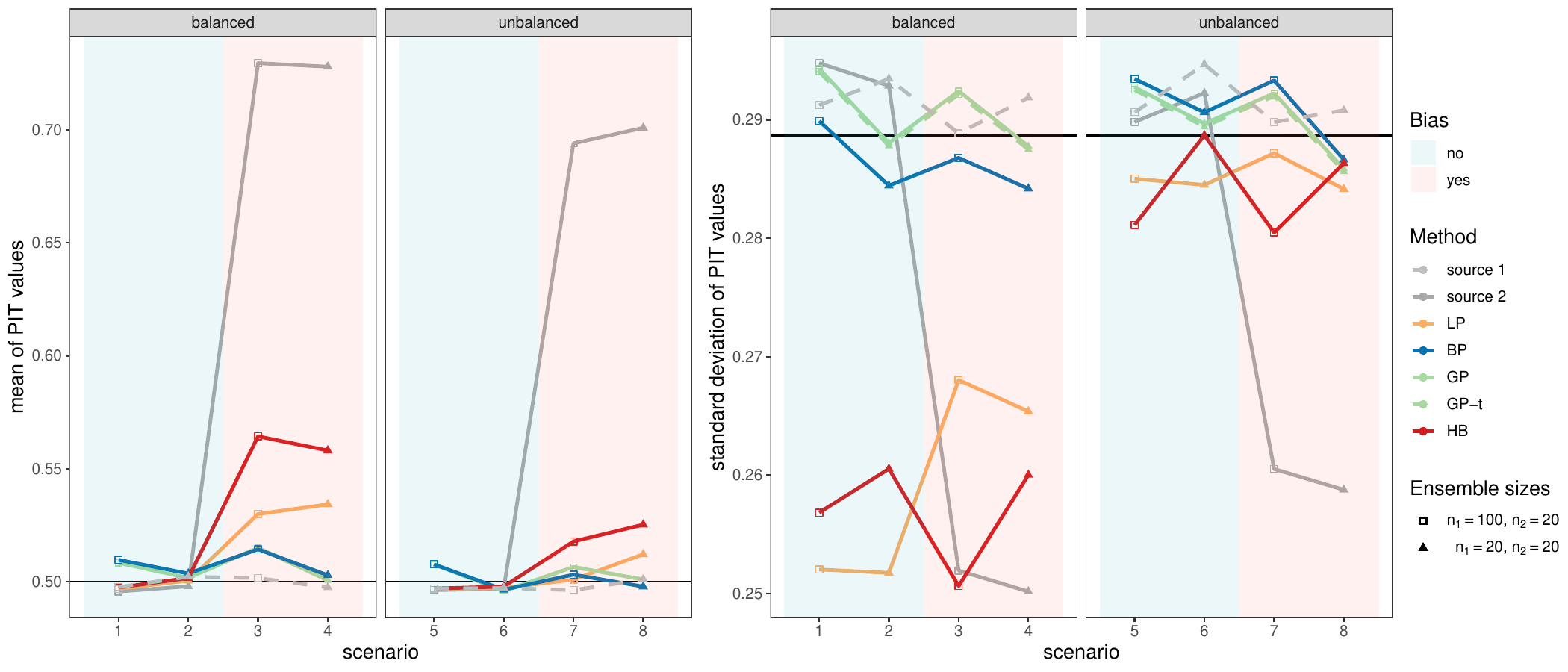}
\end{center}
\caption{\small Mean (left) and standard deviation (right) of PIT values for a selection of methods in the 8 scenarios with large training dataset ($n=1000$).
Perfect calibration is indicated by the straight black lines, at 0.50 for the mean and at $\sqrt{1/12}$ for the standard deviation. 
Scenarios with bias have a red background, while scenarios without bias have a blue background.  Scenarios with larger ensembles are shown by rectangles, and scenarios where both ensembles are small are shown by triangles.
}
\label{fig:sim1}
\end{figure}

 \begin{figure}[h] 
\begin{center}
\includegraphics[scale=0.45]{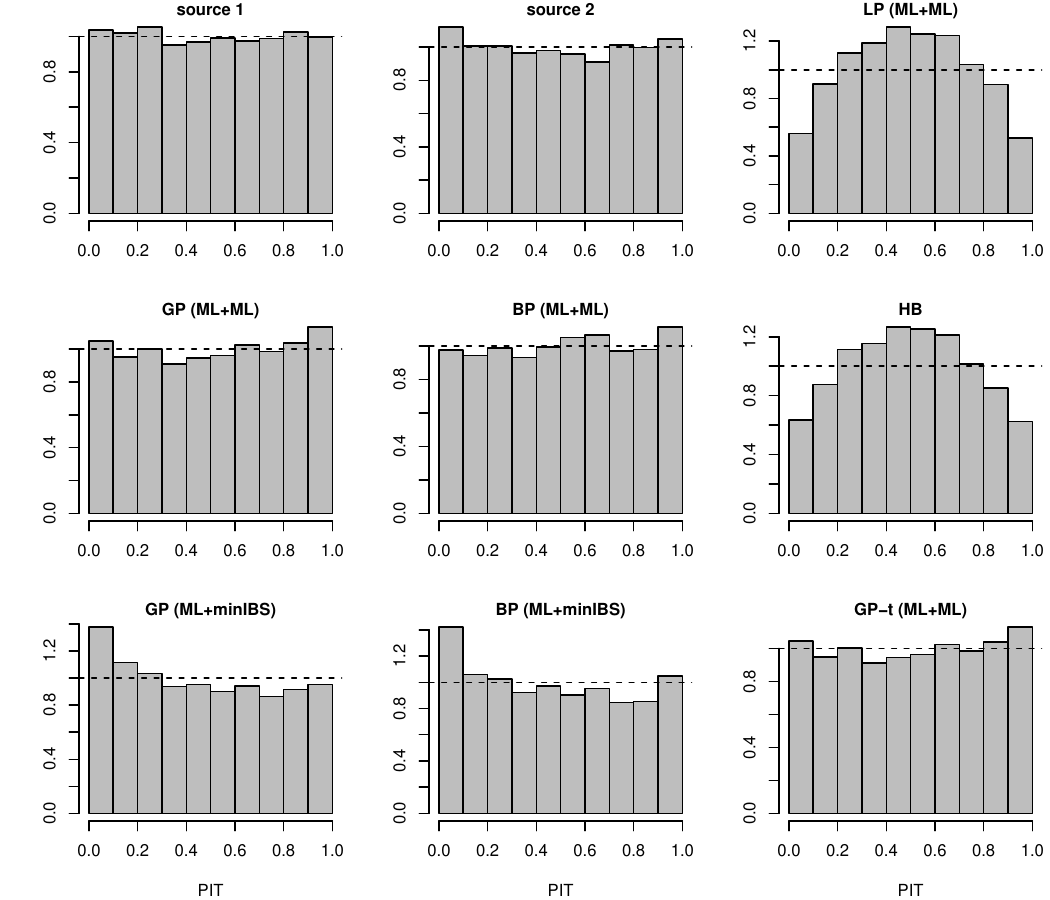}
\includegraphics[scale=0.45]{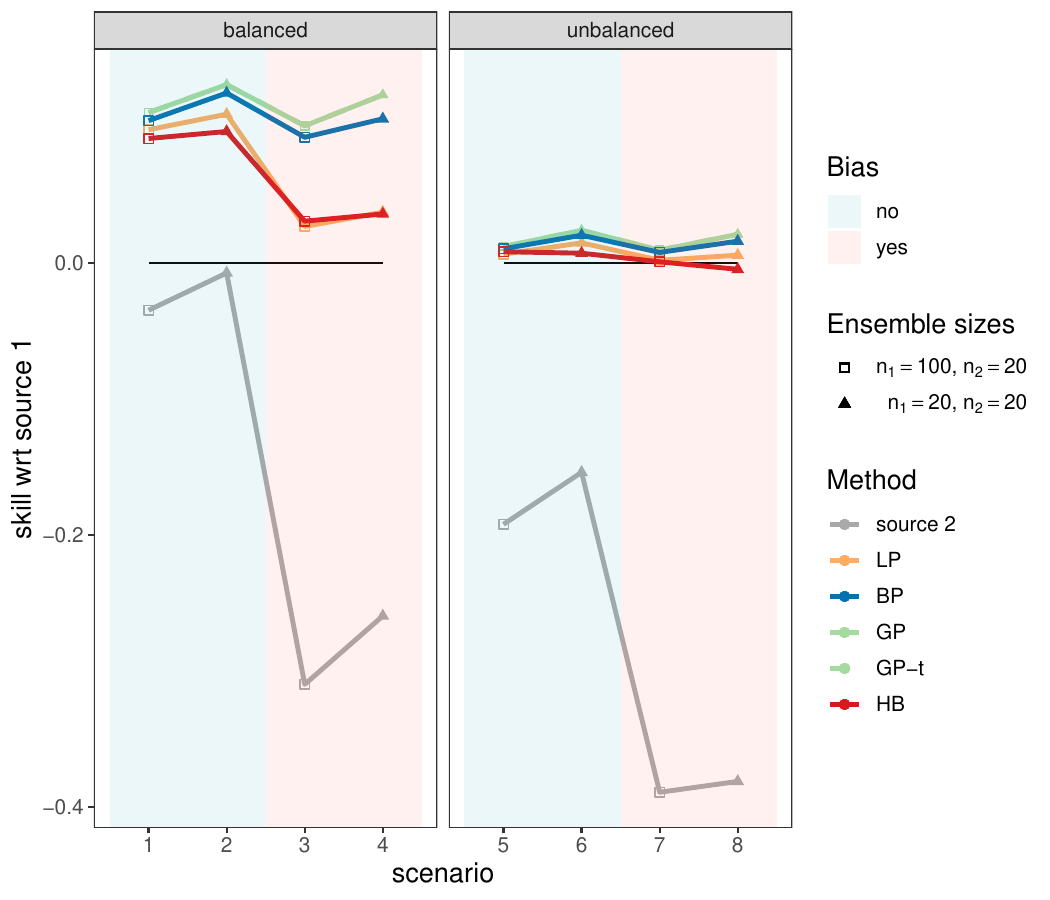}
\end{center}
\caption{\small Left: PIT histograms for scenario 1 (large training dataset, balanced sources and no bias) for the single-source forecasts and the combined forecasts from methods LP, BP$_3$, GP$_3$, HB, BP$_{\ibs}$, GP$_{\ibs}$ and GP$_3$-t. \\
Right:  IBS skill scores for a selection of methods in the 8 scenarios with large training dataset ($n=1000$). The straight black line at 0 gives the skill of the forecast from source 1: forecasts that lie above the line have better predictive performance than source 1. Scenarios with bias have a red background, while scenarios without bias have a blue background.  Scenarios with larger ensembles are shown by rectangles, and scenarios where both ensembles are small are shown by triangles. }
\label{fig:sim2}
\end{figure}

The left panel of Figure~\ref{fig:sim1} displays the mean of the PIT values for several methods across all scenarios with large training dataset. In scenarios where both sources are unbiased, all combination forecasts are unbiased too, i.e. with a mean PIT value close to 0.50. In scenarios where source 2 is biased we see that the simple combination methods LP and HB tend to produce biased forecasts too, though with less bias than the forecast from source 2.

The right panel of Figure~\ref{fig:sim1} displays the standard deviation of the PIT values, and demonstrates that the simple combination methods LP and HB  fail to produce calibrated forecasts, as expected from the existing literature on forecast combination (see Section~\ref{sec:combometh}). Even in scenarios without bias, these methods produce overdispersive forecasts, which manifest themselves in PIT standard deviations well below 0.288. The overdispersive forecasts of LP and HB are also clearly visible in the PIT histograms shown in the left panel of Figure~\ref{fig:sim2}

Skill scores (derived from the IBS scores) with respect to the forecasts from source 1 are displayed in the right panel of Figure~\ref{fig:sim1} for a subset of methods. A positive skill score indicates that the forecast has a better predictive performance than source 1. 
The forecast from source 2 is generally inferior to the forecast from source 1, by design. source 2 obtains a performance on par with source 1 only in scenarios with balanced sources and no bias. 

The predictive performance of the combination forecasts is strongly influenced by the degree of imbalance between the sources. When the sources are balanced, all combination methods greatly outperform the single-source forecasts. In these scenarios the more complex combination methods, BP$_3$, GP$_3$ and GP$_3$-t, also outperform the simpler combination methods (LP and HB), particularly when there is bias in one source. 
When the sources are unbalanced, the combination forecasts still  outperform the single-source forecasts in most scenarios, but not considerably. Also, the more complex combination methods only have a slightly higher skill scores than the simpler ones.

\subsubsection{Small training dataset}

 \begin{figure}[h] 
\begin{center}
\includegraphics[scale=0.45]{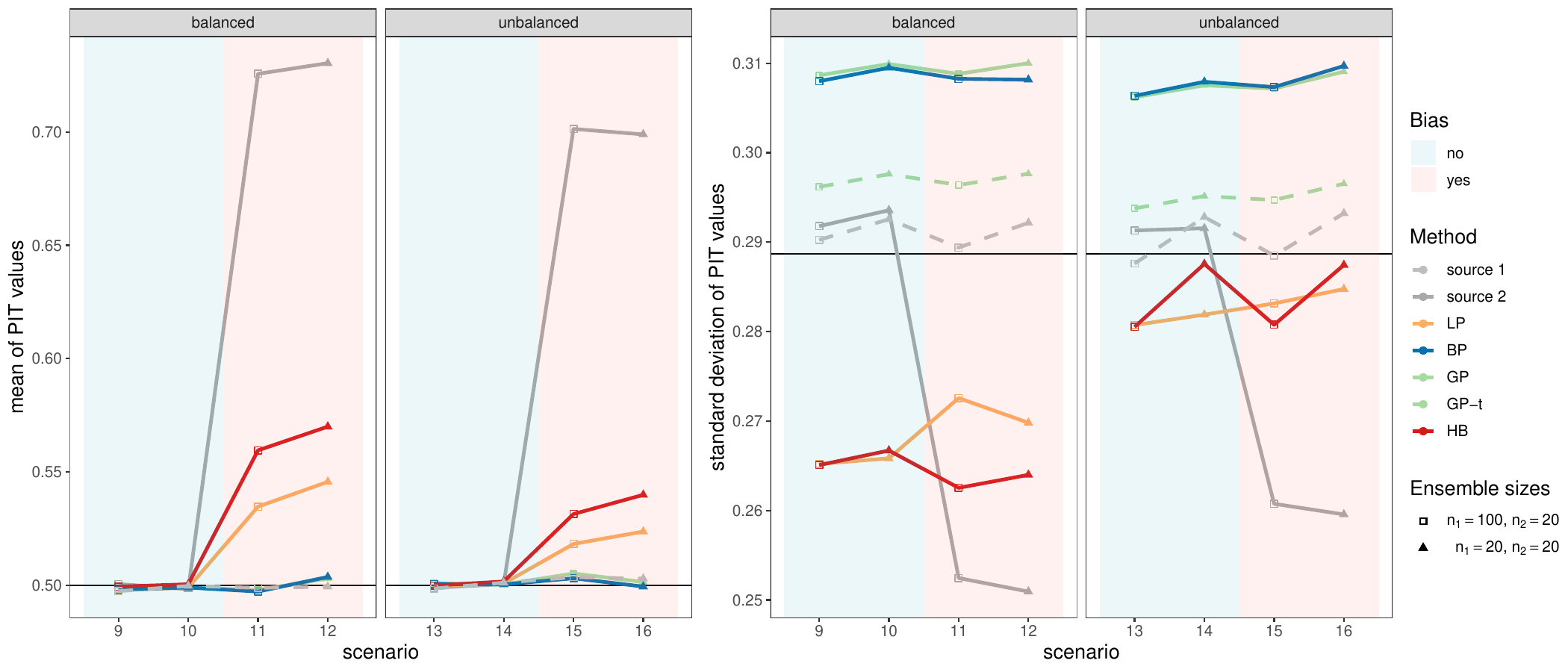}
\end{center}
\caption{\small Mean (left) and standard deviation (right) of PIT values for a selection of methods in the 8 scenarios with small training dataset ($n=20$).
Perfect calibration is indicated by the straight black lines, at 0.50 for the mean and at $\sqrt{1/12}$ for the standard deviation. 
Scenarios with bias have a red background, while scenarios without bias have a blue background.  Scenarios with larger ensembles are shown by rectangles, and scenarios where both ensembles are small are shown by triangles.
}
\label{fig:sim3}
\end{figure}

 \begin{figure}[h] 
\begin{center}
\includegraphics[scale=0.45]{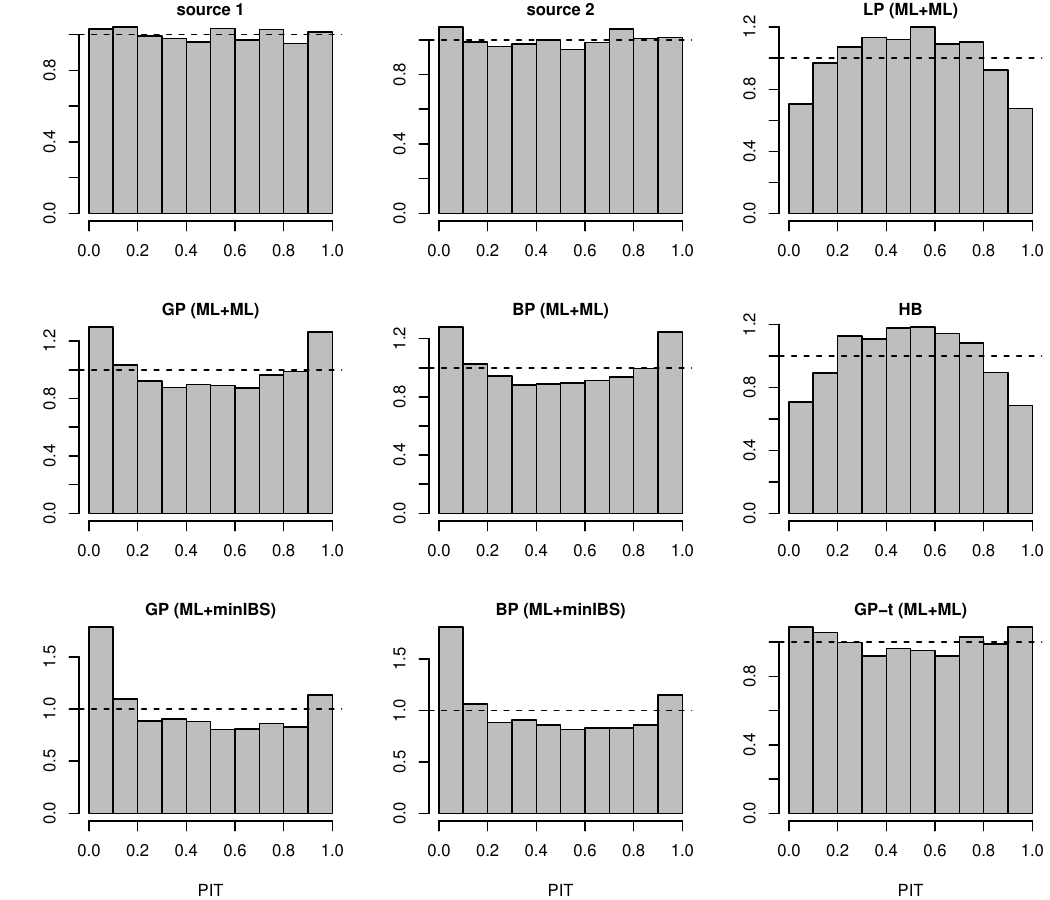}
\includegraphics[scale=0.45]{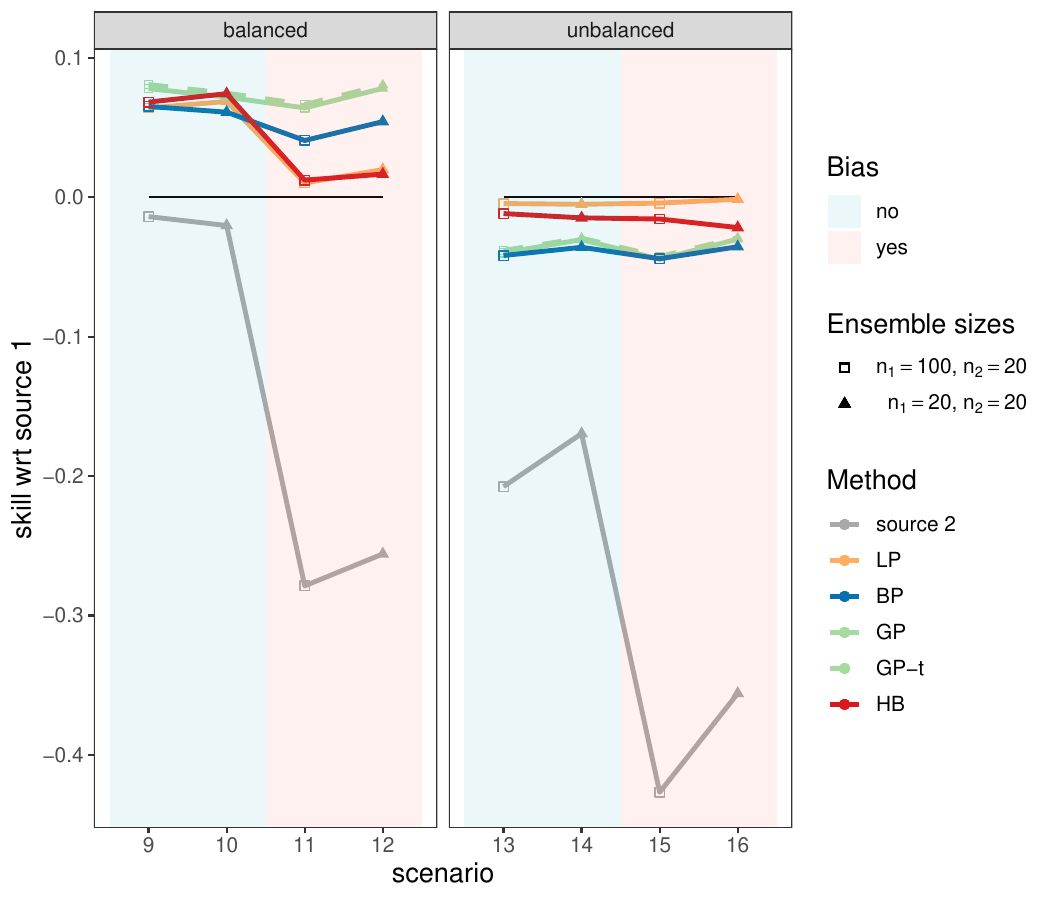}
\end{center}
\caption{\small Left: PIT histograms for scenario 9 (small training dataset, balanced sources and no bias) for the single-source forecasts and the combined forecasts from methods LP, BP$_3$, GP$_3$, HB, BP$_{\ibs}$, GP$_{\ibs}$ and GP$_3$-t. \\
Right:  IBS skill scores for a selection of methods in the 8 scenarios with small training dataset ($n=20$). The straight black line at 0 gives the skill of the forecast from source 1: forecasts that lie above the line have better predictive performance than source 1. Scenarios with bias have a red background, while scenarios without bias have a blue background.  Scenarios with larger ensembles are shown by rectangles, and scenarios where both ensembles are small are shown by triangles. }
\label{fig:sim4}
\end{figure}

Figure~\ref{fig:sim3} displays the calibration results for several methods in the scenarios with small training dataset ($n=20$). The figure with the mean PIT values is roughly similar to the corresponding figure for the large training dataset scenarios. In the figure with the standard deviation of the PIT values (right panel) we see that the forecasts from the simple combination methods LP and HB are still overdispersive, but somewhat less so than in the large $n$ scenarios. Further, the more complex combination methods, BP$_3$ and GP$_3$, now produce underdispersive forecasts, with standard deviations clearly above 0.288. In these scenarios, the calibration of the GP forecast is greatly improved by using the student-t version GP$_3$-t. This effect is also visible by comparing the PIT histograms for GP$_3$ and GP$_3$-t in the left panel of Figure~\ref{fig:sim4}.

The IBS skill scores in the right panel of Figure~\ref{fig:sim4}  demonstrate that when the sources are balanced, all combination forecasts still outperform the single-source forecasts. When the sources are unbalanced however,  all combination methods have worse predictive performance than the forecast from source 1. In those scenarios, the simpler method are better than the more complex one, but do not generally beat source 1 nevertheless.

\subsubsection{Lessons from simulations}

The main finding in these simulations is that combination methods of the type we study here can be beneficial, in the sense that they produce forecasts which are calibrated and have higher predictive skill than the single-source forecasts. However, the combination methods require either sufficiently balanced sources or sufficient amounts of training data. In such situations, more complex combination methods will be particularly beneficial, both in terms of skill and calibration. If the sources are unbalanced \textit{and} training data is scarce all combination methods struggle, particularly the more complex ones.


The simulations also illustrate various other lessons. First, the parametric single-source forecasts outperform the non-parametric ones in terms of predictive performance (see Table~\ref{tab:appIBS}). This is not surprising as the parametric single-source forecasts make use of the true model from which the data was generated. 

Second, we see that most combination methods generally beat the two benchmark methods, in terms of predictive performance and calibration, see tables in Appendix~\ref{app:SimResults}.  The equal-weight method LP$_0$ has particularly bad predictive performance in scenarios with bias or unbalanced sources. The naive method `merge' fares a bit better, particularly in scenarios with little training data, but is still worse than the proper combination methods in terms of overall predictive performance, see Table~\ref{tab:appIBS}. Both benchmark methods produce biased forecasts in the scenarios with bias, see Table~\ref{tab:appExp}, and overdispersive forecasts in almost all scenarios, see Table~\ref{tab:appSD}.

Third, the simulations allow us to compare methods where the combination parameters are estimated by maximum likelihood (ML) and the same methods with combination parameters estimated by minimising the IBS (minIBS). In terms of predictive performance, estimation by minIBS is roughly similar to estimation by ML. For some methods, minIBS is often the best, while for other methods ML is often a bit better, see Table~\ref{tab:appIBS}. The estimation by minIBS can produce less calibrated forecasts however, see Figures~\ref{fig:sim2} and \ref{fig:sim4} where the PIT histograms for BP$_{\ibs}$ and GP$_{\ibs}$ reveal some degree of bias. Also, estimation by ML is much faster, particularly for the more complex combination methods.

In addition, we can compare different versions of the more complex combination methods, GP and BP, see Table~\ref{tab:appIBS}. The one-parameter method GP$_1$ achieves predictive performance on par with the other one-parameter methods LP and HB (even beating both of these in some scenarios). However it suffers from the same issues as these methods with respect to calibration: overdispersive forecasts in the unbiased scenarios and biased forecasts in the scenarios with bias in source 2. The two-parameter methods GP$_2$ and BP$_2$ do not produce overdispersive forecasts (since they can correct the variance), but get biased forecasts in the scenarios with bias. These two methods could then be good choices in applications where the sources are known, or assumed, to be unbiased. These results illustrate the effect of the extra combination parameter in the more complex combination methods. For GP for example, the mean parameter fixes potential biases and the variance parameter ensures calibration.

Finally, we learn that ensemble size does not tend to have a large effect. At least not for the set of values we have studied here.

\section{Case-study: time-to-hard-freeze}
 \label{sec:application}

 To test the combination methods in practice, we use a dataset consisting of seasonal and subseasonal temperature forecasts that give information about the first occurrence of hard freeze after October 1 for a set of locations in Norway and Fennoscandia. Here, hard freeze refers to a mean daily near-surface air temperature below 0$^\circ$C, as described in Section~\ref{sec:motivation}. The goal is to test whether a combination of subseasonal and seasonal forecasts gives more accurate time-to-hard-freeze forecasts than any of the two forecasting products alone, and investigating differences between the combination methods proposed in Section \ref{sec:combometh}.

\subsection{Data}
We use daily temperature data from three different sources: (1) seasonal forecasts initialized on September 1; (2)~subseasonal forecasts initialized on September 30; and (3) the actual time-to-hard-freeze observations which we obtain from the gridded, observation-based data product SeNorge2018 (version 18.12). The forecasts and observations cover the period from 1999 to 2018, which will be our study period.

Seasonal forecasts of 2-meter temperatures, initialized on September 1, were downloaded from the Copernicus Climate Data Store (CDS) \footnote{\url{https://confluence.ecmwf.int/display/CKB/C3S+Seasonal+Forecasts\%3A+dataset+documentation}} for 1999-2018 and come from the NWP models of 5 different weather centres (ECMWF, CMCC, DWD, Méteo France and UK Met Office). The number of ensemble members varies between years, but for each year there are approximately $150$ ensemble members in total, with a maximum lead time of 5-6 months. See Appendix~\ref{app:DataSources} for more details about the forecasting systems involved.

The subseasonal forecasts are 2-meter temperature NWP forecasts delivered by ECMWF, initialized on September 30 \footnote{\url{ https://confluence.ecmwf.int/display/S2S/ECMWF+model+description}}. We use a forecast produced in 2019, with forecast cycle number CY46R1. Forecasts with issue dates in 1999-2018 were available with 11 ensemble members and a maximum lead time of 46 days (November 15).

The SeNorge product is a dataset developed by the Norwegian Meteorological Institute and contains mean daily temperature data for Norway and parts of Fennoscandia \citep{lussana2018}. The product is based on 2-meter temperature observations from meteorological stations that are interpolated to a 1 km $\times$ 1  km grid by using the optimal interpolation approach \citep{Gandin,Kalnay}. See \citet{lussana2018} for details. Here, we use the SeNorge data as a substitute for local, observed surface temperatures, as in \cite{roksvaag2023}. 

Among the large number of grid nodes available in the SeNorge dataset, study locations are selected so that seasonal and subseasonal forecasts are available in reasonable proximity. We only select locations with mean elevation between 0.5 and 800 m a.s.l. The elevation restriction is added since the study is motivated by applications in agriculture and we therefore want to avoid grid nodes that mainly cover sea, lakes or mountains. Also, locations that experience the first hard freeze before October 1 for more than 5 out of our 20 study years are omitted from the analysis. See more details about the selection of study locations in  Appendix~\ref{app:DataSources}.
The resulting 103  study locations are shown in Figure~\ref{fig:mean_obs_ttf}, together with histograms of the observed or forecasted time-to-hard-freeze for the three datasets.

\begin{figure}[h] 
\begin{center}
\includegraphics[scale=0.5]{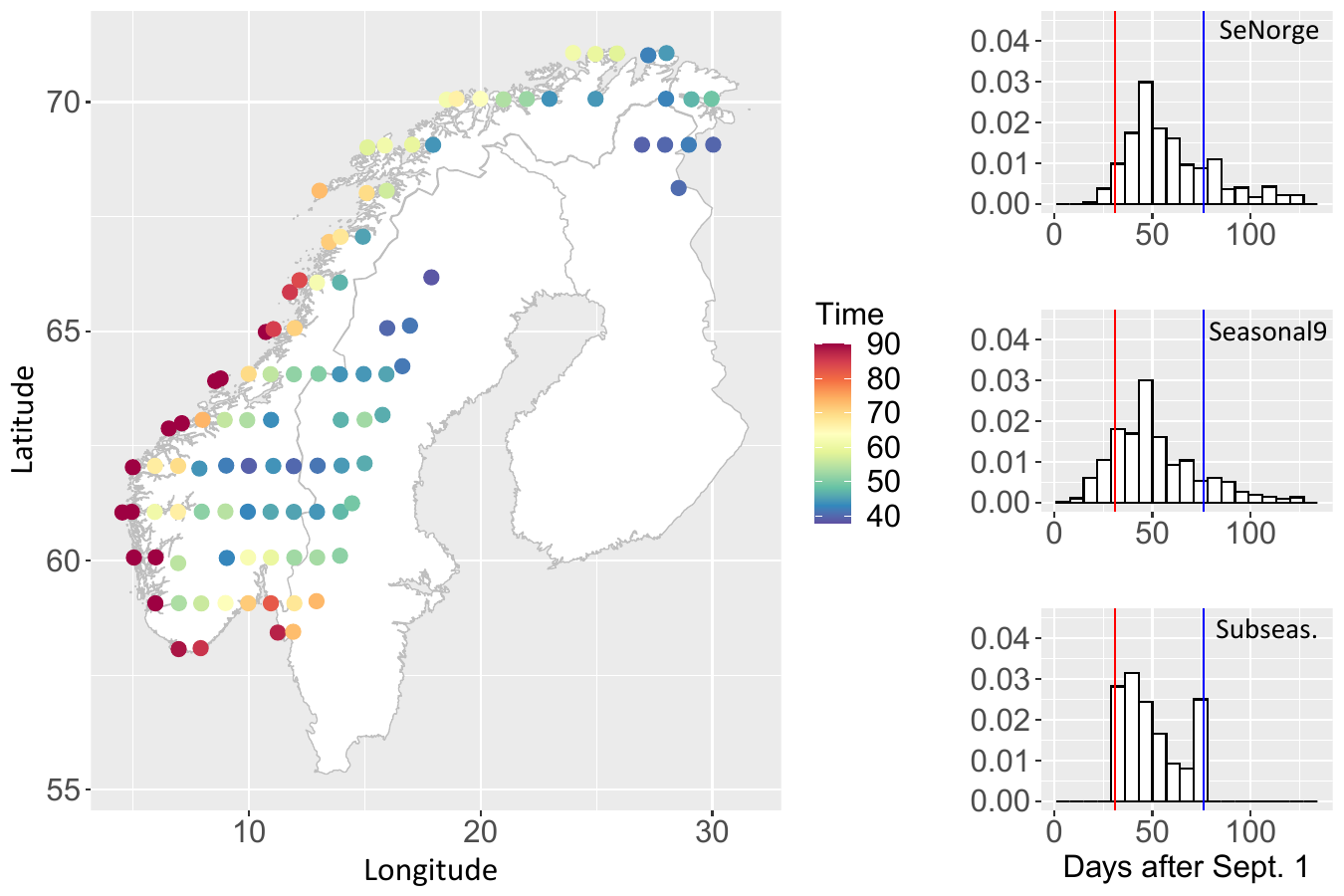}\\
\end{center}
\caption{\small Left: The 103 study locations in Fennoscandia and their mean observed time-to-hard-freeze from September 1 (1999-2018) according to SeNorge, where the scale is truncated at 122 days (December 31). Right: Distribution of the observed or forecasted time-to-hard-freeze for all locations for 1999-2018 for SeNorge, the post-processed seasonal forecast ensemble and the post-processed subseasonal forecast ensemble. The sample sizes are 2060, 355212 and 30360 for the three data sources. The red line marks October 1 (day 31), while the blue line shows the maximum lead time date of the subseasonal forecast (day 30+46, November 15).}
\label{fig:mean_obs_ttf}
\end{figure}

Seasonal and subseasonal forecasts are known to be subject to systematic biases \citep{ecmwfeval,postprocessingogsubseas,bias,drift}, and also come at a considerably coarser resolution compared to SeNorge. To make sure that the forecasted temperatures better match the temperatures observed at the target locations, the raw seasonal and subseasonal forecasts are post-processed in the same manner as in \cite{roksvaag2023}, by adjusting the forecast means and standard deviations relative to historical means and standard deviations from SeNorge, see more details in Appendix~\ref{app:PostProc}. From the post-processed seasonal and subseasonal temperature forecasts, we extract the day of the first occurrence of hard freeze after September 1 for each combination of year, location, ensemble member and forecast system. Likewise, we take out the first day of observed hard freeze for SeNorge.

\subsection{Experimental set-up}
We use a subset of the combination methods from Table \ref{tab:sim1} to predict the time to the first autumn frost from October 1 by combining forecasts produced from the two data sources:

\begin{itemize}
\item Source 1: the seasonal forecast, i.e. an ensemble of around $150$ members with issue date September 1 and censoring date December 31.
\item Source 2: the subseasonal forecasts, i.e. an ensemble of 11 members with issue date September 30 and censoring date November 15.
\end{itemize}

From the above data sources, we make single-source forecasts by using the non-parametric Kaplan-Meier estimator (KM) or a parametric log-normal distribution for each combination of year and location, independently of other years and locations. We consider October 1 as the issue date of the combined forecast.  To account for what has happened during the month of September, we remove ensemble members in the seasonal forecast that predict hard freeze before October 1 if there are any. 

After estimating the single-source forecast for a location, we go through each target year (1999-2018) and estimate the parameters of the combination methods based on forecasts from the remaining 19 hindcast years. There are then 19 forecast-observation pairs available for the estimation of the combination parameters. 
For this estimation step we consider both non-parametric and parametric methods, see Section~\ref{sec:estimation}.
In the non-parametric case, with single-source KM forecasts, we test the Hazard Blending method (HB) and Linear Pooling (LP). Combination parameters are estimated by minimizing the IBS. In the parametric case, with single-source log-normal forecasts, we fit the Linear Pooling method (LP), the Gaussian pooling (GP) with 1 to 3 parameters (GP$_1$,  GP$_2$, GP$_3$) and the Beta Pool (BP) method with 2 and 3 parameters (BP$_2$, BP$_3$). Combination parameters are then estimated by maximum likelihood (ML). In addition, we test two benchmark methods, the LP$_0$ method, i.e. linear pooling with fixed, equal weights on both forecast sources ($\omega=0.5$), and the `merge' method where a log-normal model is fitted to a combined ensemble with members from both  seasonal and subseasonal forecasts.

The combination forecasts are evaluated by using the IBS from Eq.~\eqref{eq:ibs} computed from October 1 (day 31) to December 31 (day 122), with the time-to-hard-freeze from SeNorge as the ground truth. We calculate location-specific IBS', based on the 20 (annual) predictions performed for each location. In the case that the first hard freeze has already occurred at the target location and target year before October 1, or on October 1, we set the IBS to 0 for all methods. This happens for 82 of the 2060 locations-year combinations ($103 \times 20$) that we are considering.

The single-source subseasonal and seasonal forecasts are used as reference forecasts, both non-parametric (KM)  and parametric (log-normal, ML) versions. In addition, we make a climatology reference forecast by fitting the KM estimator to the historical observed time-to-hard-freeze observations from 1999-2018, with the target year being left out together with years that give hard freeze before October 1 (if any). Results based on both raw and post-processed ensembles are presented.

\subsection{Results}

The results from the case-study are summarized in Table \ref{tab:IBSrealdata} in terms of the location-specific IBS scores averaged over all study locations. We see that most combination methods beat the single-source forecasts, and also the climatology, when post-processed ensembles are used as input. The simpler combination methods generally perform better than the more complex combination methods. The LP (ML) method has the lowest IBS of all methods, while the methods GP$_1$, GP$_2$ and BP$_2$  have lower mean IBS' than their more complex counterparts GP$_3$ and BP$_3$. This is likely due to the low number of hindcast years (19) which may result in overfitting and unstable parameter estimates, particularly for the methods with several combination parameters. 

The best combination methods have somewhat better predictive power than the naive benchmark methods LP$_0$ and `merge'. Note also that the log-normal model seems to fit the forecast ensembles well: we see that the predictive performance of the single-source forecasts is actually higher when using the log-normal model (ML) than with the non-parametric estimator (KM). This is the case for both sources, but the effect is particularly large for source (2) which has few ensemble members.

\begin{table}[ht]\caption{\small  Mean IBS for the combination methods (LP, GP, BP, HB, merge) and the reference single-source forecasts (Clim, Seas9 and Subseas) over all target locations. Models marked with KM are based on the non-parametric Kaplan-Meier survival function and estimated by minimizing the IBS, while models with ML are based on the parametric log-normal survival function and fitted using the maximum likelihood estimator (ML). Results for both post-prosessed and raw forecasts are shown. Combination methods with a lower mean IBS than all of the three single-source forecasts (Clim, Seas9 and Subseas) are highlighted in bold in each row.}
\centering \footnotesize 
\begin{tabular}{c|ccc|cc|cccc} \toprule
 &   Clim (KM)  & Seas9 (KM) & Subseas (KM)  & Seas9 (ML) & Subseas (ML) & LP (ML) & GP$_3$ (ML) & BP$_3$ (ML) & HB (KM)  \\ \midrule 
Post-pr.& 0.0753 & 0.0758 & 0.1034 & 0.0748 & 0.0854 & \textbf{0.0723} &  0.0762 & 0.0760 & \textbf{0.0735}  \\ 
  Raw & 0.0753  & 0.0946 & 0.1396 & 0.0941 & 0.1239 & 0.0837 &  0.0772 & \textbf{0.0745} & 0.0829   \\ \bottomrule
\end{tabular}\vspace{3mm}
\begin{tabular}{c|cccc|ccc} \toprule
  &  LP (KM) & GP$_1$ (ML)& GP$_2$ (ML)& BP$_2$ (ML) & LP$_0$ (ML) & Merge (ML) \\  \midrule 
Post-pr. & \textbf{0.0733} & \textbf{0.0726} & \textbf{0.0735} & \textbf{0.0733} & \textbf{0.0747} & \textbf{0.0739}  \\
  Raw  &   0.0847 & 0.0838 & 0.0829 & 0.0808 & 0.0918 & 0.0912 \\ \bottomrule
\end{tabular} \label{tab:IBSrealdata}
\end{table}

Considering the raw forecasts, the combination methods with three parameters (GP$_3$ and BP$_3$) perform better than the simpler methods (GP$_1$,GP$_2$, BP$_2$, HB, LP). According to Table \ref{tab:IBSrealdata}, all the combination methods outperform the single-source forecasts they are based on, but BP$_3$ also beats the climatology forecast as the only method. The results suggest that the 1-2 extra parameters in the  BP$_3$ (or GP$_3$) model are more important when we have not already accounted for biases and dispersion through post-processing.

Figure \ref{fig:pit_realdata} displays PIT histograms for the predictions of the time-to-hard-freeze for the combination forecasts and for the single-source forecasts when using post-processed input data. Despite the post-processing, the subseasonal forecasts are clearly biased, with many PIT values around 1 suggesting underestimation of the time-to-hard-freeze. The seasonal forecast tends  towards overestimation in the ML case.  The combination methods based on parametric single-source forecasts (ML) appear to be better calibrated than the methods using non-parametric forecasts (KM). The non-parametric HB (KM) and LP (KM) both show signs of overestimation, while the method that performed best in terms of the IBS, LP (ML), looks well-calibrated. See Table \ref{tab:PITreal} for mean and standard deviation of the PIT values for all methods and Figure \ref{fig:PIThist_raw} for PIT histograms for the raw forecasts.

\begin{figure}[h] 
\begin{center}
\includegraphics[scale=0.73]{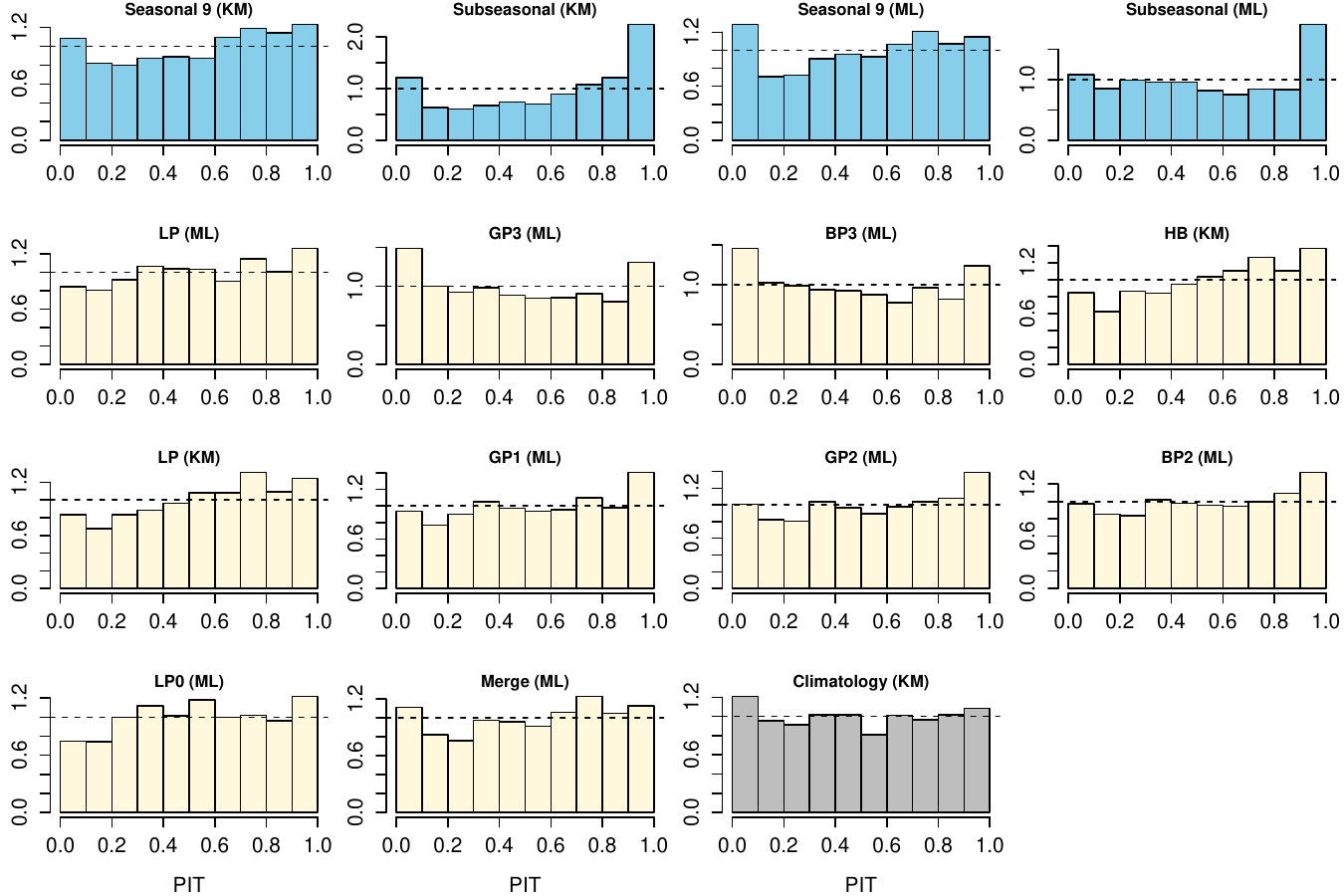}\\
\end{center} 
\caption{\small PIT histograms for the single-source forecasts (blue), 
the combination methods (yellow) and the climatology forecast (gray) for predictions of time-to-hard-freeze when post-processed forecasts are used.}
\label{fig:pit_realdata}
\end{figure}

In the remainder of this section, we focus on the best-performing combination method, LP (ML), for the post-processed forecasts. In Figure \ref{fig:IBS_LP_realdata} the IBS scores of LP (ML) are compared to the IBS' of the single-source forecasts. The right plot in the upper panel shows that the IBS of the subseasonal forecast is rather high compared to the climatology for many locations, but there are also locations where the subseasonal forecast gives better predictions than the climatology. Next, considering the right plot in the lower panel, we see how the combination method is able to improve the results at locations where the subseasonal forecast performed poorly, while at the same time keeping the good performance for locations where the subseasonal forecast was performing well. 

\begin{figure}[h] 
\begin{center}
\includegraphics[scale=0.5]{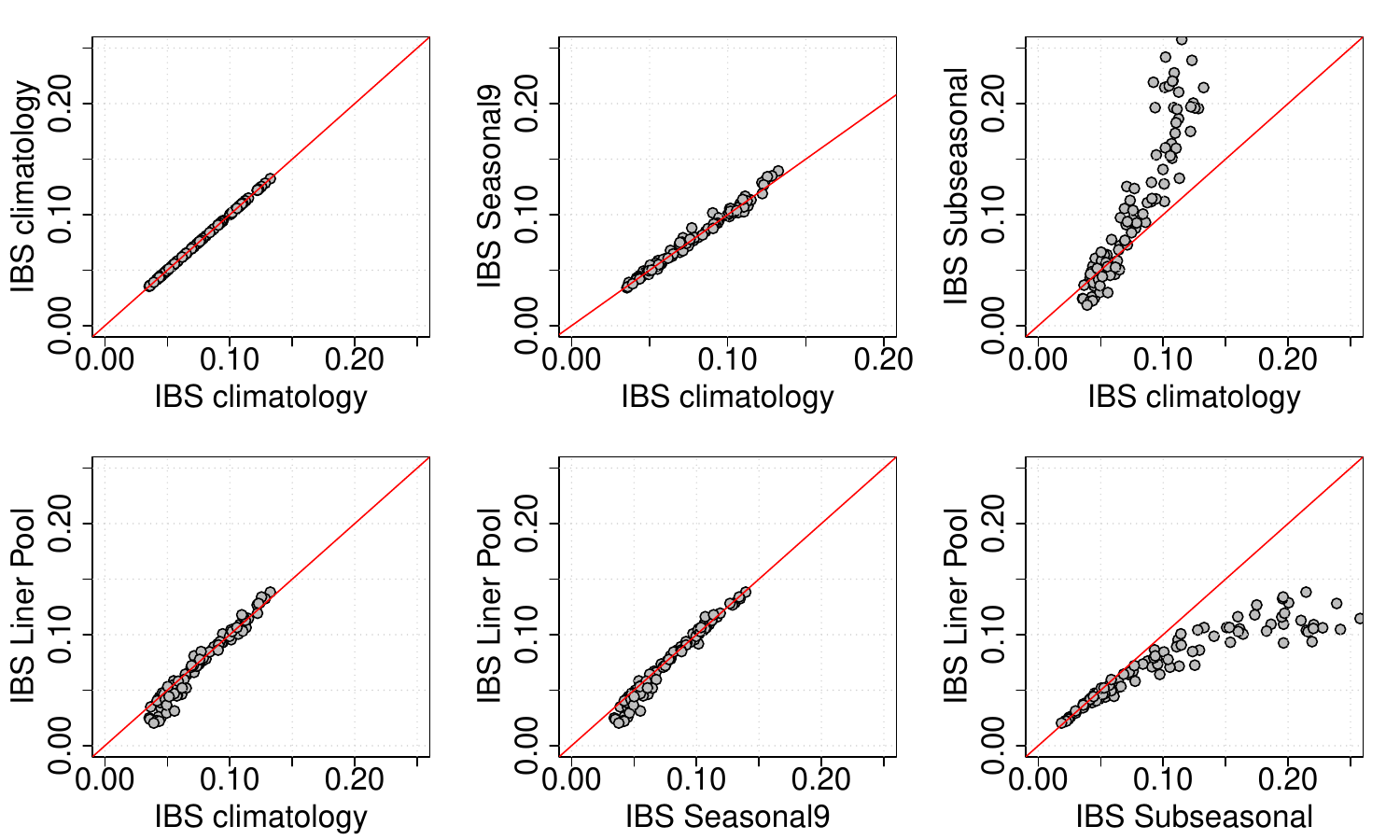}\\
\end{center}
\caption{\small The top row shows location-specific IBS' for the single source forecasts, Subseasonal (ML) and Seasonal (ML), compared to climatology forecast, Clim (KM). The bottom row shows the IBS' for the Linear Pool combination (ML) method that is  compared to the climatology forecast and the single source forecasts (bottom row). All ensemble forecasts involved are post-processed. }
\label{fig:IBS_LP_realdata}
\end{figure}

Figure \ref{fig:IBS_LP_realdata} further show that compared to the seasonal forecast, the combination method is better or approximately as good for most locations (lower panel, centre plot). The same applies for the combination method compared to climatology (lower panel, left plot). We also notice how the climatology forecast and the seasonal forecasts are almost equivalent in performance (upper panel, centre plot), which is in line with the results in \citep{roksvaag2023}: the skill of the seasonal time-to-hard-freeze forecast is typically high compared to climatology for regions where hard freeze occurs the first three weeks after the forecast issue date, and in this study, most locations experience hard freeze on day 31 or later compared to the issue date of the seasonal forecast (see Figure \ref{fig:mean_obs_ttf}). Currently, the seasonal forecasts we use in this study are only initialized once a month (September 1, October 1,...) and published 12 days after their initialization date (September 12, October 12,...). Ideally, a seasonal forecast from mid-September, published just before October 1, would be available in real time for combination.

\begin{figure}
     \centering
     \begin{subfigure}[b]{0.30\textwidth}
         \centering
         \includegraphics[width=\textwidth]{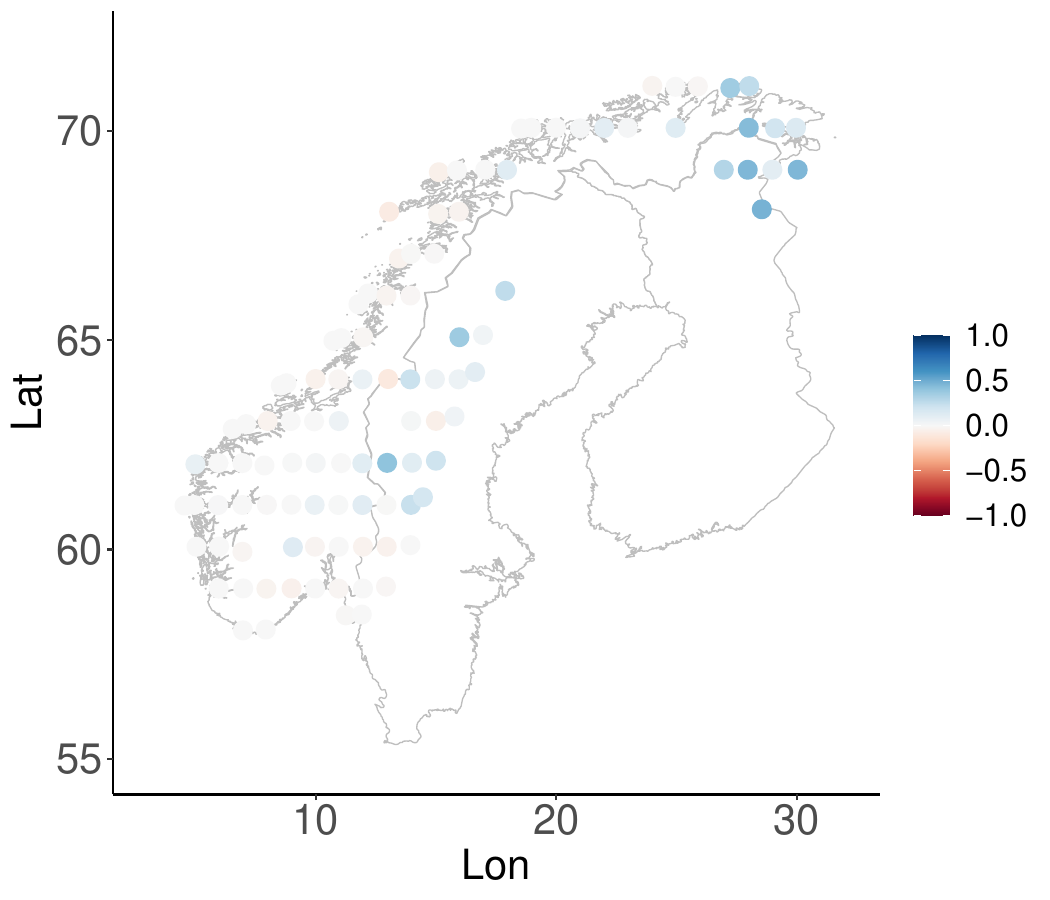}
         \caption{IBS skill score relative to the seasonal forecast (and climatology).}
         \label{fig:seasonalmap}
     \end{subfigure}\hspace{3mm}
     \begin{subfigure}[b]{0.30\textwidth}
         \centering
         \includegraphics[width=\textwidth]{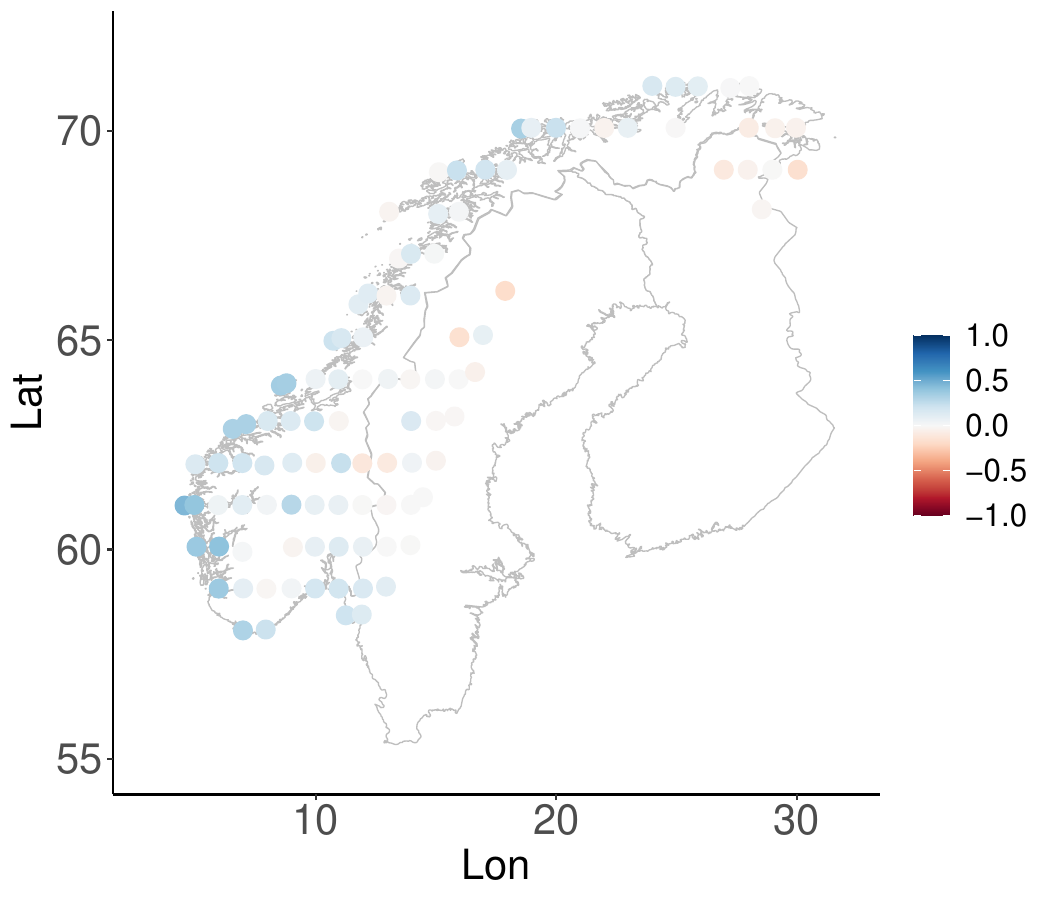}
         \caption{IBS skill score relative to the subseasonal forecast.}
         \label{fig:subseasonalmap}
     \end{subfigure}
          \begin{subfigure}[b]{0.30\textwidth}
         \centering
         \includegraphics[width=\textwidth,trim={0 0 0cm 0},clip]{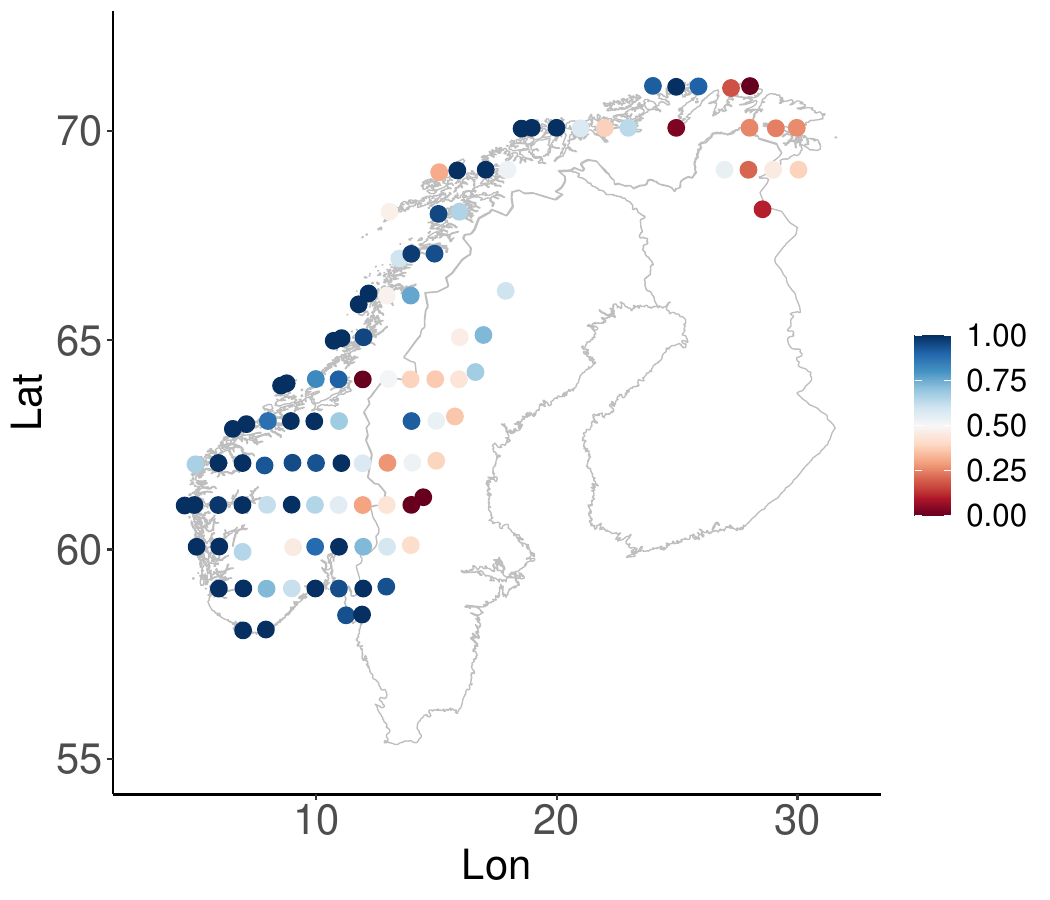}
         \caption{Weight $\omega$ on the seasonal forecast in the LP (ML) method.}
         \label{fig:weightmap}
     \end{subfigure}
        \caption{\small Location-specific skill scores of the LP (ML) combination forecast relative to the  seasonal (Figure \ref{fig:seasonalmap}) and the subseasonal forecast (Figure \ref{fig:subseasonalmap}) when post-processed forecasts are considered. As the seasonal forecast and the climatology forecast are similar in terms of IBS (see Figure \ref{fig:IBS_LP_realdata}), the skill scores of LP compared to the climatology forecast (Clim, KM) is almost identical to what is shown in Figure \ref{fig:seasonalmap}. Figure \ref{fig:weightmap} presents the median weight parameter $\omega$ for each location, and shows how much weight is put on the seasonal forecast in the the LP (ML) method. }
        \label{fig:resultmaps}
\end{figure}

Figure \ref{fig:resultmaps} shows how the skill scores of the LP (ML) method are distributed across our study locations relative to the subseasonal and seasonal forecasts. The parameter $\omega$ from Eq.~\eqref{eq:LPmethod}, that determines how much weight is assigned to the seasonal forecast in the LP method, is also displayed. We see that the combination method is considerably better than the subseasonal forecasts in many regions (Figure \ref{fig:subseasonalmap}), but particularly in southern Norway and along the coast. In coastal areas, hard freeze occurs late, as shown in Figure \ref{fig:mean_obs_ttf}, and often after the censoring date of the subseasonal forecast (day 76). It therefore makes sense that the seasonal forecast with a longer lead time is weighted more in these regions, as shown in Figure  \ref{fig:weightmap}.  

Compared to the seasonal forecast, the LP (ML) method is better in regions where frost comes early, i.e. in the northernmost parts of the study area and in the eastern parts, far away from the coast (Figure \ref{fig:seasonalmap}). These are regions where the subseasonal forecast performs well according to Figure \ref{fig:subseasonalmap}, and the combination method is able to incorporate this by giving the subseasonal forecast a high weight (red areas in Figure \ref{fig:weightmap}). In the remaining areas, the combination method has a similar performance as the seasonal forecast (and thus also the climatology).

\section{Discussion and concluding remarks}
\label{sec:conclusion}

In this paper, we have investigated methods for combining forecasts from different sources of information, in particular the combination of meteorological forecasts for time-to-event outcomes.
We have introduced two new combination methods, the Gaussian pool and hazard blending, and compared their performance with state-of-the art methods in an extensive simulation study and in a case-study where the goal is to predict the time until the first hard freeze of the season.

In the simulation study we found that combination forecasts typically outperform the single-source forecasts they are based on when there is sufficient amounts of training data \textit{or} when the sources are reasonably balanced. On the other hand, all combination methods struggle when there is little training data \textit{and} unbalanced sources. In these scenarios the more complex combination methods are generally worse than the simpler ones. The time-to-hard-freeze case-study provides both support and nuance to these findings. 
In the case-study we have a small training dataset (19 forecast-observation pairs) and we have two sources which are clearly quite unbalanced: the seasonal forecast has a higher predictive performance than the subseasonal forecast, at least when looking at the mean IBS over all locations. 
From the simulation study, we could therefore have expected that the case-study would be a challenging situation for the combination methods. 
Fortunately, we nonetheless obtained combination forecasts with a higher predictive power than the single-source forecasts, and which also outperformed the climatology forecast. Generally, the simpler single-parameter combination methods fared better than the more complex ones, and this was inline with our expectations from the simulation study.

We believe that these findings are likely to hold in the context of combination for other meteorological outcomes, beyond time-to-event set-ups which we have studied. 
Similar results have been reported elsewhere, for example in \citet{baran2018} where simpler combination methods generally outperformed the more complex ones in their case-studies. 
In the econometrics literature however, combination methods generally seem to struggle more than we have seen in our investigations. Many studies report that estimating combination weights from training data often results in a worst forecast than using fixed, equal weights, see for instance \citet{wang2022}. This phenomenon is referred to as the `forecast combination puzzle' and is discussed in for instance \citet{wang2022, zischke2022, frazier2023}. We do not observe this phenomenon: the simple benchmark method LP$_0$ with fixed, equal weights produces forecasts which have lower predictive performance than most other combination methods in both the simulation study and the case-study.
The reason might be, as mentioned in the introduction, that forecasts in meteorology are fundamentally different from forecasts in for instance economics. In particular, the forecasts in meteorology originate from physical models rather than statistical models on historical data, and this likely makes meteorology a field where combination methods should be particularly fruitful. 

In \citet{gneiting2013} the authors argue that combination methods ought to be \emph{flexibly dispersive}, so that they produce calibrated forecasts. To achieve this property, more combination parameters are required, and we see that in practice the fruitfulness of such approaches will depend on the size of the available training dataset and on aspects of the sources which we are combining. Still, we certainly think that the more complex, flexibly dispersive methods have a role to play, particularly in situations where the data sources are reasonably balanced or when there is sufficient training data. Also, we can consider more complex estimation schemes for the combination parameters, which could reduce the risk of overfitting to the limited training dataset. One may for example use regularisation methods in the estimation, as considered in  \citet{diebold2022}. And also more complex modelling of the combination parameters across for example time and space, as is often done in econometrics (see for instance  \citet{aastveit2022} and \citet{wang2022}). Specifically, for our application one could have considered to let some or all combination parameters depend on spatial covariates like for example elevation or spatial random effects (i.e. assuming some degree of similarity in the parameters that belong to locations that are close). This would have allowed us to borrow strength across locations and would likely have resulted in more stable estimation of the combination parameters. We plan to pursue ideas like these in future work. 

\section{Acknowledgements}
This work was supported by the Research Council of Norway through grant 309562589 (Climate Futures) and by the European Union's Horizon research and innovation program through grants 101081555 (Impetus4Change) and 869730 (CONFER). The authors thank Michael Scheuerer for preparing the raw, subseasonal forecast data and Thordis Thorarinsdottir for helpful discussion.
Computer simulations were performed in HPC solutions provided by the Oslo Centre for Epidemiology and Biostatistics, at the University of Oslo.

\bibliographystyle{biometrika}
\bibliography{merging.bib}

\section*{Appendix}
\setcounter{subsection}{0}
\renewcommand{\thesubsection}{\Alph{subsection}}

\setcounter{table}{0}
\renewcommand{\thetable}{A\arabic{table}}

\subsection{Flexible dispersivity of the Gaussian pooling}

Here we show that the Gaussian pooling is flexibly dispersive relative to the class $\mathcal C_\R$ of continuous probability distributions on $\R$.
For this result we adapt the setting of \citet{gneiting2013}. Our proof is largely analogous to the proof of their Proposition 3.5.

\begin{thm}
Let $Y$ have distribution $F_0 \in \mathcal C_\R$ and let $F_1,...,F_n \in \mathcal C_\R$. 
Denote by $G$ the combination method $G_{\mu,\sigma}:(F_1,...,F_n)\mapsto \Phi\bigg(\frac {\sum_{i=1}^n \omega_i \Phi^{-1}(F_i) - \mu}{\sigma}\bigg)$, and by $Z_{\mu,\sigma}$ the probability integral transform $G_{\mu,\sigma}(Y)$. As $(\mu,\sigma)$ varies over $\R\times \R_+$, the variance of $Z_{\mu,\sigma}$ attains any value in the open
interval (0,1/4). In particular, $G_{\mu,\sigma}$ is flexibly dispersive.
\label{th1}
\end{thm}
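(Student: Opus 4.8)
The plan is to reduce the statement to a one-dimensional problem about a single fixed random variable. Write $W := \sum_{i=1}^n \omega_i\,\Phi^{-1}\!\big(F_i(Y)\big)$ with $Y\sim F_0$; I would first record that, under the stated hypotheses on $F_0,F_1,\dots,F_n\in\mathcal{C}_\R$ (and with convex weights $\omega_i$ summing to one), $W$ is an a.s.\ finite, non-degenerate random variable whose law $H$ on $\R$ is atomless. With this in hand, $Z_{\mu,\sigma}=G_{\mu,\sigma}(Y)=\Phi\!\big((W-\mu)/\sigma\big)$ is a $(0,1)$-valued random variable, so $v(\mu,\sigma):=\var\!\big(Z_{\mu,\sigma}\big)$ is well defined and lies in $[0,1/4]$, and the theorem becomes the assertion that the image of $v\colon \R\times\R_+\to[0,1/4]$ contains $(0,1/4)$.

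Next I would show that $v$ is continuous on $\R\times\R_+$. For every fixed $w\in\R$ the map $(\mu,\sigma)\mapsto\Phi\!\big((w-\mu)/\sigma\big)$ is continuous, and $\Phi$ is bounded, so dominated convergence makes $(\mu,\sigma)\mapsto\mathbb{E}\,Z_{\mu,\sigma}$ and $(\mu,\sigma)\mapsto\mathbb{E}\,Z_{\mu,\sigma}^2$ continuous, hence $v$ is continuous. Since $\R\times\R_+$ is connected, $v(\R\times\R_+)$ is an interval contained in $[0,1/4]$.

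It then remains to show that this interval has infimum $0$ and supremum $1/4$, after which the intermediate value property finishes the argument. Fixing $\mu=0$ and letting $\sigma\to\infty$, $\Phi(W/\sigma)\to\Phi(0)=\tfrac12$ pointwise, so $Z_{0,\sigma}\to\tfrac12$ in $L^2$ and $v(0,\sigma)\to 0$; hence $\inf v=0$. For the other end, let $\mu^\star$ be a median of $W$, which exists as a real number and, $H$ being atomless, satisfies $\P(W>\mu^\star)=\P(W<\mu^\star)=\tfrac12$. As $\sigma\downarrow 0$ we have $\Phi\!\big((W-\mu^\star)/\sigma\big)\to\mathbb{I}\{W>\mu^\star\}$ a.s., so $Z_{\mu^\star,\sigma}$ converges in $L^2$ to a $\mathrm{Bernoulli}(\tfrac12)$ variable and $v(\mu^\star,\sigma)\to\tfrac14$; hence $\sup v=\tfrac14$. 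An interval with infimum $0$ and supremum $1/4$ contains $(0,1/4)$, which is the claim; and since the full range $(0,1/4)$ is attained — in particular the value $1/12$ corresponding to a neutrally dispersed PIT — $G_{\mu,\sigma}$ is flexibly dispersive in the sense of \citet{gneiting2013}.

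The step I expect to be the main obstacle is the reduction in the first paragraph: one has to verify carefully that $W$ is finite a.s.\ (so the combined PIT is well behaved), non-degenerate (so that the $\sigma\downarrow 0$ limit is a genuine two-point distribution rather than a point mass, which is what forces $\sup v = 1/4$), and atomless (so that a real median with the balanced property exists). This is exactly where the assumption $F_0,F_i\in\mathcal{C}_\R$ does the work, and it is the part most easily glossed over. The continuity, the connectedness/intermediate-value step, and the two boundary limits are then routine, the latter two justified by boundedness of $\Phi$.
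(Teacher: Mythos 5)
Your proof is correct and follows essentially the same route as the paper's: continuity of the variance in $(\mu,\sigma)$ plus the intermediate value property, with the limit $\sigma\to\infty$ giving variance $0$ and the limit $\sigma\downarrow 0$ at a suitably centred $\mu$ giving variance $1/4$. The only cosmetic difference is that you centre at a median of $W=\sum_i\omega_i\Phi^{-1}(F_i(Y))$ whereas the paper centres at $\mu_0=\sum_i\omega_i\Phi^{-1}(F_i(y_0))$ for $y_0$ a median of $F_0$ (these serve the same purpose since $y\mapsto\sum_i\omega_i\Phi^{-1}(F_i(y))$ is nondecreasing), and you are somewhat more explicit than the paper about the atomlessness of the law of $W$ needed for the Bernoulli$(1/2)$ limit.
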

\begin{proof} The variance of $Z_{\mu,\sigma}$ is continuous in $\mu$ and $\sigma$. For $\sigma\to \infty$, the variance of $Z_{0,\sigma}$ converges to 0. Therefore, it is sufficient to find a $\mu_0$ and a sequence $(\sigma_i)_{i\in\N}$ such that $Var(Z_{\mu_0,\sigma_i})\to 1/4$. Let $y_0$ be a median of $F_0$, and let $\mu_0 := \sum_{i=1}^n \omega_i \Phi^{-1}(F_i(y_0))$. This implies $G_{\mu_0,\sigma_i}(y_0) = 1/2$ for any $\sigma_i>0$. Now, for any sequence $(\sigma_i)_{i\in\N}$ converging to 0, the law of $G_{\mu_0,\sigma_i}(Y)$ converges weakly to the Bernoulli measure with probability 1/2. This implies $Var(Z_{\mu_0,\sigma_i})\to 1/4$.
\end{proof}

\newpage

\subsection{Additional results from simulations}
\label{app:SimResults}

\begin{table}[ht] 
\caption{\small Mean of PIT values for 17 forecasting methods in 16 different scenarios. }\label{tab:appExp}
\centering \small
\begin{tabular}{rrrrrrrrrrrrrrrrr}
  \hline
 & 1 & 2 & 3 & 4 & 5 & 6 & 7 & 8 & 9 & 10 & 11 & 12 & 13 & 14 & 15 & 16 \\ 
  \hline
source 1 & 0.50 & 0.50 & 0.50 & 0.50 & 0.50 & 0.50 & 0.50 & 0.50 & 0.50 & 0.50 & 0.50 & 0.50 & 0.50 & 0.50 & 0.50 & 0.50 \\ 
  source 2 & 0.50 & 0.50 & 0.73 & 0.73 & 0.50 & 0.50 & 0.69 & 0.70 & 0.50 & 0.50 & 0.73 & 0.73 & 0.50 & 0.50 & 0.70 & 0.70 \\ 
  source 1 (KM) & 0.50 & 0.50 & 0.50 & 0.50 & 0.50 & 0.50 & 0.50 & 0.50 & 0.50 & 0.50 & 0.50 & 0.50 & 0.50 & 0.50 & 0.50 & 0.50 \\ 
  source 2 (KM) & 0.50 & 0.50 & 0.73 & 0.73 & 0.50 & 0.50 & 0.70 & 0.71 & 0.50 & 0.50 & 0.73 & 0.73 & 0.50 & 0.50 & 0.71 & 0.70 \\ 
  LP & 0.50 & 0.50 & 0.53 & 0.53 & 0.50 & 0.50 & 0.50 & 0.51 & 0.50 & 0.50 & 0.53 & 0.55 & 0.50 & 0.50 & 0.52 & 0.52 \\ 
  BP$_3$ & 0.51 & 0.50 & 0.51 & 0.50 & 0.51 & 0.50 & 0.50 & 0.50 & 0.50 & 0.50 & 0.50 & 0.50 & 0.50 & 0.50 & 0.50 & 0.50 \\
  GP$_3$ & 0.51 & 0.50 & 0.51 & 0.50 & 0.51 & 0.50 & 0.51 & 0.50 & 0.50 & 0.50 & 0.50 & 0.50 & 0.50 & 0.50 & 0.51 & 0.50 \\ 
  HB & 0.50 & 0.50 & 0.56 & 0.56 & 0.50 & 0.50 & 0.52 & 0.53 & 0.50 & 0.50 & 0.56 & 0.57 & 0.50 & 0.50 & 0.53 & 0.54 \\ 
  LP$_\ibs$ & 0.50 & 0.50 & 0.56 & 0.55 & 0.50 & 0.50 & 0.52 & 0.52 & 0.50 & 0.50 & 0.56 & 0.57 & 0.50 & 0.50 & 0.53 & 0.54 \\ 
  BP$_\ibs$ & 0.47 & 0.46 & 0.48 & 0.47 & 0.50 & 0.49 & 0.49 & 0.49 & 0.46 & 0.46 & 0.46 & 0.47 & 0.48 & 0.49 & 0.49 & 0.49 \\ 
  GP$_\ibs$ & 0.47 & 0.46 & 0.48 & 0.46 & 0.50 & 0.49 & 0.49 & 0.49 & 0.46 & 0.46 & 0.46 & 0.47 & 0.48 & 0.48 & 0.49 & 0.49 \\ 
  BP$_2$ & 0.50 & 0.50 & 0.57 & 0.56 & 0.50 & 0.50 & 0.51 & 0.52 & 0.50 & 0.50 & 0.56 & 0.57 & 0.50 & 0.50 & 0.53 & 0.53 \\
  GP$_1$ & 0.50 & 0.50 & 0.57 & 0.57 & 0.50 & 0.50 & 0.52 & 0.53 & 0.50 & 0.50 & 0.57 & 0.58 & 0.50 & 0.50 & 0.53 & 0.54 \\ 
  GP$_2$ & 0.50 & 0.50 & 0.58 & 0.57 & 0.50 & 0.50 & 0.52 & 0.53 & 0.50 & 0.50 & 0.57 & 0.58 & 0.50 & 0.50 & 0.54 & 0.54 \\ 
  GP$_3$-t & 0.51 & 0.50 & 0.51 & 0.50 & 0.51 & 0.50 & 0.51 & 0.50 & 0.50 & 0.50 & 0.50 & 0.50 & 0.50 & 0.50 & 0.50 & 0.50 \\ 
  LP$_0$ & 0.50 & 0.50 & 0.62 & 0.61 & 0.50 & 0.50 & 0.60 & 0.60 & 0.50 & 0.50 & 0.61 & 0.61 & 0.50 & 0.50 & 0.60 & 0.60 \\ 
  merge & 0.50 & 0.50 & 0.55 & 0.62 & 0.50 & 0.50 & 0.53 & 0.61 & 0.50 & 0.50 & 0.54 & 0.62 & 0.50 & 0.50 & 0.54 & 0.61 \\ 
   \hline
\end{tabular}

\bigskip

\caption{\small Standard deviation of PIT values for 17 forecasting methods in 16 different scenarios. }\label{tab:appSD}
 \small
 \centering
\begin{tabular}{rrrrrrrrrrrrrrrrr}
  \hline
 & 1 & 2 & 3 & 4 & 5 & 6 & 7 & 8 & 9 & 10 & 11 & 12 & 13 & 14 & 15 & 16 \\ 
  \hline
source 1 & 0.29 & 0.29 & 0.29 & 0.29 & 0.29 & 0.29 & 0.29 & 0.29 & 0.29 & 0.29 & 0.29 & 0.29 & 0.29 & 0.29 & 0.29 & 0.29 \\ 
  source 2 & 0.29 & 0.29 & 0.25 & 0.25 & 0.29 & 0.29 & 0.26 & 0.26 & 0.29 & 0.29 & 0.25 & 0.25 & 0.29 & 0.29 & 0.26 & 0.26 \\ 
  source 1 (KM) & 0.29 & 0.30 & 0.29 & 0.30 & 0.29 & 0.30 & 0.29 & 0.30 & 0.29 & 0.30 & 0.29 & 0.30 & 0.29 & 0.30 & 0.29 & 0.30 \\ 
  source 2 (KM) & 0.31 & 0.31 & 0.26 & 0.26 & 0.30 & 0.31 & 0.27 & 0.27 & 0.30 & 0.31 & 0.26 & 0.26 & 0.30 & 0.31 & 0.27 & 0.27 \\ 
  LP & 0.25 & 0.25 & 0.27 & 0.27 & 0.29 & 0.28 & 0.29 & 0.28 & 0.27 & 0.27 & 0.27 & 0.27 & 0.28 & 0.28 & 0.28 & 0.28 \\ 
  BP$_3$ & 0.29 & 0.28 & 0.29 & 0.28 & 0.29 & 0.29 & 0.29 & 0.29 & 0.31 & 0.31 & 0.31 & 0.31 & 0.31 & 0.31 & 0.31 & 0.31 \\ 
  GP$_3$ & 0.29 & 0.29 & 0.29 & 0.29 & 0.29 & 0.29 & 0.29 & 0.29 & 0.31 & 0.31 & 0.31 & 0.31 & 0.31 & 0.31 & 0.31 & 0.31 \\ 
  HB & 0.26 & 0.26 & 0.25 & 0.26 & 0.28 & 0.29 & 0.28 & 0.29 & 0.27 & 0.27 & 0.26 & 0.26 & 0.28 & 0.29 & 0.28 & 0.29 \\ 
  LP$_\ibs$ & 0.25 & 0.25 & 0.25 & 0.25 & 0.28 & 0.28 & 0.28 & 0.28 & 0.26 & 0.26 & 0.26 & 0.26 & 0.28 & 0.28 & 0.28 & 0.28 \\ 
  BP$_\ibs$ & 0.30 & 0.29 & 0.30 & 0.30 & 0.29 & 0.30 & 0.29 & 0.29 & 0.32 & 0.32 & 0.32 & 0.32 & 0.31 & 0.31 & 0.31 & 0.32 \\ 
  GP$_\ibs$ & 0.30 & 0.29 & 0.30 & 0.29 & 0.29 & 0.29 & 0.29 & 0.29 & 0.32 & 0.32 & 0.32 & 0.32 & 0.31 & 0.31 & 0.31 & 0.32 \\ 
  BP$_2$ & 0.29 & 0.28 & 0.28 & 0.28 & 0.29 & 0.29 & 0.29 & 0.29 & 0.30 & 0.30 & 0.29 & 0.29 & 0.30 & 0.30 & 0.30 & 0.30 \\ 
  GP$_1$ & 0.27 & 0.27 & 0.27 & 0.27 & 0.28 & 0.29 & 0.28 & 0.28 & 0.27 & 0.27 & 0.27 & 0.27 & 0.28 & 0.28 & 0.28 & 0.29 \\ 
  GP$_2$ & 0.29 & 0.29 & 0.28 & 0.28 & 0.29 & 0.29 & 0.29 & 0.29 & 0.30 & 0.30 & 0.29 & 0.29 & 0.30 & 0.30 & 0.30 & 0.30 \\ 
  GP$_3$-t & 0.29 & 0.29 & 0.29 & 0.29 & 0.29 & 0.29 & 0.29 & 0.29 & 0.30 & 0.30 & 0.30 & 0.30 & 0.29 & 0.30 & 0.29 & 0.30 \\ 
  LP$_0$ & 0.25 & 0.25 & 0.23 & 0.23 & 0.27 & 0.27 & 0.25 & 0.25 & 0.25 & 0.25 & 0.23 & 0.23 & 0.27 & 0.27 & 0.25 & 0.25 \\ 
  merge & 0.27 & 0.25 & 0.26 & 0.23 & 0.28 & 0.27 & 0.27 & 0.25 & 0.27 & 0.25 & 0.26 & 0.23 & 0.28 & 0.27 & 0.27 & 0.25 \\ 
   \hline
\end{tabular}

\end{table}

\begin{table}[ht] 
\caption{\small Mean Integrated Brier scores (IBS) for 17 forecasting methods in 16 different scenarios. The lowest scores in each scenario are bold-faced.  }
\centering \small
\begin{tabular}{rrrrrrrrr}
  \hline
 & 1 & 2 & 3 & 4 & 5 & 6 & 7 & 8 \\ 
  \hline
source 1 & 0.0778 & 0.0794 & 0.0775 & 0.0795 & 0.0771 & 0.0808 & 0.0778 & 0.0791 \\ 
  source 2 & 0.0805 & 0.0800 & 0.1016 & 0.1001 & 0.0919 & 0.0933 & 0.1080 & 0.1092 \\ 
  source 1 (KM) & 0.0781 & 0.0807 & 0.0778 & 0.0808 & 0.0773 & 0.0823 & 0.0781 & 0.0804 \\ 
  source 2 (KM) & 0.0875 & 0.0867 & 0.1033 & 0.1015 & 0.0973 & 0.0985 & 0.1102 & 0.1115 \\ 
  LP & 0.0702 & 0.0708 & 0.0755 & 0.0765 & 0.0766 & 0.0797 & 0.0776 & 0.0786 \\ 
  BP$_3$ & 0.0697 & 0.0695 & 0.0704 & 0.0711 & 0.0763 & 0.0792 & 0.0772 & 0.0778 \\ 
  GP$_3$ & 0.0692 & 0.0690 & 0.0697 & 0.0697 & 0.0762 & \textbf{0.0789} & 0.0770 & \textbf{0.0774} \\ 
  HB & 0.0707 & 0.0718 & 0.0752 & 0.0766 & 0.0764 & 0.0803 & 0.0777 & 0.0794 \\ 
  LP$_\ibs$ & 0.0702 & 0.0705 & 0.0749 & 0.0759 & 0.0763 & 0.0794 & 0.0774 & 0.0785 \\ 
  BP$_\ibs$ & 0.0689 & 0.0689 & 0.0696 & 0.0702 & 0.0761 & 0.0791 & 0.0770 & 0.0776 \\ 
  GP$_\ibs$ & \textbf{0.0683} & \textbf{0.0685} & \textbf{0.0687} & \textbf{0.0692} & \textbf{0.0760} & 0.0790 & \textbf{0.0769} & \textbf{0.0774} \\ 
  BP$_2$ & 0.0692 & 0.0694 & 0.0751 & 0.0760 & 0.0762 & 0.0792 & 0.0775 & 0.0785 \\ 
  GP$_1$ & 0.0688 & 0.0692 & 0.0740 & 0.0746 & 0.0761 & 0.0790 & 0.0773 & 0.0781 \\ 
  GP$_2$ & 0.0688 & 0.0690 & 0.0746 & 0.0749 & \textbf{0.0760} & \textbf{0.0789} & 0.0775 & 0.0781 \\ 
  GP$_3$-t & 0.0692 & 0.0690 & 0.0697 & 0.0697 & 0.0762 & \textbf{0.0789} & 0.0770 & \textbf{0.0774} \\ 
  LP$_0$ & 0.0702 & 0.0704 & 0.0779 & 0.0781 & 0.0786 & 0.0807 & 0.0836 & 0.0841 \\ 
  merge & 0.0730 & 0.0700 & 0.0749 & 0.0778 & 0.0762 & 0.0806 & 0.0779 & 0.0847 \\ 
   \hline
\end{tabular}

\vspace{0.2cm}

\begin{tabular}{rrrrrrrrr}
  \hline
 & 9 & 10 & 11 & 12 & 13 & 14 & 15 & 16 \\ 
  \hline
source 1 & 0.0777 & 0.0798 & 0.0773 & 0.0803 & 0.0776 & 0.0797 & \textbf{0.0780} & \textbf{0.0801} \\ 
  source 2 & 0.0788 & 0.0814 & 0.0989 & 0.1008 & 0.0937 & 0.0932 & 0.1113 & 0.1087 \\ 
  source 1 (KM) & 0.0780 & 0.0812 & 0.0776 & 0.0818 & 0.0779 & 0.0812 & 0.0783 & 0.0816 \\ 
  source 2 (KM) & 0.0854 & 0.0878 & 0.1005 & 0.1021 & 0.0991 & 0.0986 & 0.1135 & 0.1110 \\ 
  LP & 0.0728 & 0.0743 & 0.0765 & 0.0787 & 0.0780 & 0.0801 & 0.0783 & 0.0803 \\ 
  BP$_3$ & 0.0727 & 0.0749 & 0.0742 & 0.0759 & 0.0808 & 0.0826 & 0.0814 & 0.0830 \\ 
  GP$_3$ & 0.0717 & 0.0741 & 0.0724 & 0.0740 & 0.0807 & 0.0821 & 0.0814 & 0.0825 \\ 
  HB & 0.0724 & 0.0739 & 0.0764 & 0.0789 & 0.0785 & 0.0809 & 0.0792 & 0.0819 \\ 
  LP$_\ibs$ & 0.0714 & 0.0730 & 0.0760 & 0.0782 & 0.0782 & 0.0799 & 0.0789 & 0.0807 \\ 
  BP$_\ibs$ & 0.0730 & 0.0749 & 0.0743 & 0.0762 & 0.0815 & 0.0829 & 0.0824 & 0.0836 \\ 
  GP$_\ibs$ & 0.0723 & 0.0744 & 0.0731 & 0.0749 & 0.0815 & 0.0828 & 0.0824 & 0.0833 \\ 
  BP$_2$ & 0.0709 & 0.0726 & 0.0770 & 0.0791 & 0.0788 & 0.0806 & 0.0796 & 0.0816 \\ 
  GP$_1$ & 0.0695 & 0.0715 & 0.0747 & 0.0768 & 0.0776 & \textbf{0.0794} & 0.0787 & 0.0803 \\ 
  GP$_2$ & 0.0699 & 0.0720 & 0.0759 & 0.0780 & 0.0786 & 0.0802 & 0.0798 & 0.0814 \\ 
  GP$_3$-t & 0.0715 & 0.0738 & \textbf{0.0722} & \textbf{0.0738} & 0.0806 & 0.0820 & 0.0813 & 0.0825 \\ 
  LP$_0$ & \textbf{0.0693} & 0.0713 & 0.0769 & 0.0786 & 0.0797 & 0.0801 & 0.0849 & 0.0846 \\ 
  merge & 0.0725 & \textbf{0.0708} & 0.0745 & 0.0784 & \textbf{0.0769} & 0.0800 & 0.0785 & 0.0851 \\ 
   \hline
\end{tabular}
\label{tab:appIBS}
\end{table}

\newpage

\subsection{Accounting for estimation uncertainty in the source-specific probabilistic forecasts}
\label{app:Snew}

As described in Section~\ref{sec:survmeth} in the main text, we estimate the source-specific survival forecast $S_k(t)$ using the ensemble members from source $k$. If one uses a parametric model for the ensemble, one may simply plug-in the estimated parameters into the appropriate parametric forms and proceeded as if these estimates are in fact the true parameters. When the sample size is low this will result in probabilistic predictions which are not perfectly calibrated, even under the true model. The reason is that we ignore the uncertainty due to the estimation of the parameters. 

Luckily, there are ways to make corrected versions of $\widehat S_{k}(t)$ which take into account the variability from the estimation. This is particularly simple for the log-normal model. Say we have estimated $\xi$ and $\tau$ from $n_k$ observations, assuming a log-normal model. The corrected survival forecast for $T$ will be of the following form
\begin{align*}
    \widehat S_{k,\fix}(t) = G_{n_k-1}\left(\frac{\log(t)-\hat \xi}{\hat \tau \sqrt{1+1/n_k}}\right),
\end{align*}
where $G_{n_k-1}(\cdot)$ is the CDF of a student-t distribution with $n_k-1$ degrees of freedom.

\subsection{Additional details about the case-study}
\label{app:CaseStudy}

\subsubsection{Data sources}
\label{app:DataSources}

We use temperature data from three different sources: (1) seasonal forecasts initialized on September 1; (2)~subseasonal forecasts initialized on September 30; and (3) the actual time-to-hard-freeze observations which we obtain from the gridded, observation-based data product SeNorge2018 (version 18.12).

\begin{table}[h!]\caption{\small Information about the mean daily temperature data used in the analysis. The forecasts and observations cover the period from 1999 to 2018. The number of ensembles provided in the hindcasts varies in the study period, as indicated by the years shown in parenthesis. This also applies for the system number of the forecast. The range of the seasonal forecasts is 5-7 months into the future, but we use December 31 as our maximum date.}\label{tab:dataprod}\footnotesize
\begin{tabular}{lllllll}\hline
      Forecast/data type        &    From      & Spatial resol. & Ens. members & System/version number(s) & Init. date & Max. date \\
              \hline
   Seasonal fcst     & ECMWF      & $1^\circ \times     1^\circ$                &   25 (<2017), 51              &           5    & 01.09               & 31.12           \\
Seasonal  fcst        & CMCC     &   $1^\circ \times     1^\circ$                 &               40   &       35        & 01.09               & 31.12           \\
    Seasonal  fcst      &DWD     &   $1^\circ \times     1^\circ$                 & 30 (<2018), 50                &    21 ($\leq$2016), 2         & 01.09               & 31.12           \\
Seasonal fcst & Météo France      &   $1^\circ \times     1^\circ$                 &       25 (<2017), 51           &     8 ($\leq$2016), 6          & 01.09               & 31.12           \\
Seasonal fcst & UK Met Office     &  $1^\circ \times     1^\circ$                  &            7 (<2017), 2    &        601 ($\leq$2016), 12 (2017), 13 (2018)       & 01.09               & 31.12           \\
Subseasonal  fcst       & ECMWF  &  $1.5^\circ \times     1.5^\circ$                  &  11                 &       Cycle CY46R1 active in 2019     & 30.09                   & 15.11               \\
Observations   & SeNorge2018 &    1 km $\times$ 1 km                & $\times$         &     Version 18.12         & $\times$             & $\times$         \\
 \hline
\end{tabular}
\end{table}

Seasonal forecasts of 2-meter temperatures, initialized on September 1, were downloaded from the Copernicus Climate Data Store (CDS) \footnote{\url{ https://confluence.ecmwf.int/display/CKB/C3S+Seasonal+Forecasts\%3A+datasets+documentation}}. Ensemble forecasts were available from five weather centres: The European Centre for Medium-Range Weather Forecasts (ECMWF), the Euro-Mediterranean Center for Climate Change (CMCC), Deutscher Wetterdienst (DWD), Météo France and the UK Met Office. Forecasts from 1992-2021 with lead times 5-7 months and spatial resolution $1 ^\circ \times 1 ^\circ$, were downloaded with corresponding hindcasts for the 
 same historical period (1992-2021). The number of ensemble members varies among the forecasting systems and years, but each year there are approximately $150$ ensemble members in total for the five systems combined. See Table \ref{tab:dataprod} for details about the forecasting systems involved in the analysis and the number of ensemble members. 

The subseasonal forecasts we use, are the 2-meter temperature forecasts delivered by ECMWF, initialized on September 30 \footnote{\url{ https://confluence.ecmwf.int/display/S2S/ECMWF+model+description}}. We use a forecast produced in 2019, with forecast cycle number CY46R1. Hindcasts from the same cycle number for 1999-2018 were available with a spatial resolution of $1.5^\circ \times 1.5 ^\circ$, 11 ensemble members and a maximum lead time of 46 days (November 15).

The  temporal resolution of both the subseasonal and seasonal forecasts was 6 hours, but to match the SeNorge data, the data were converted to mean daily temperatures by computing averages of the 6 hour temperatures.

The SeNorge product is a dataset developed by the Norwegian Meteorological Institute and contains mean daily temperature data for the mainland of Norway, and parts of Sweden, Finland and Russia 
 \citep{lussana2018}. The product is based on 2-meter temperature observations from meteorological stations that are interpolated to a 1 km $\times$ 1  km grid by using the optimal interpolation approach \citep{Gandin,Kalnay}. The optimal interpolation ensures reasonable temperature changes with elevation. See \cite{lussana2018} for details. The SeNorge temperature products have been used in a variety of studies within different scientific fields, both as model input and for evaluations. See e.g. \cite{senorgeapplic2,senorgeapplic1,senorgeapplic4,senorgeapplic3}. Here, we use the SeNorge data as a substitute for local, observed surface temperatures, as in \cite{roksvaag2023}.

A number of grid nodes from SeNorge were used as target locations for predicting the time-to-hard-freeze. These were selected as follows: for each whole degree longitude and latitude covering the SeNorge product, we search for a nearby SeNorge location with mean elevation between 0.5 and 800 m a.s.l. The elevation restriction is added as the study is motivated by applications in agriculture and we want to avoid grid nodes that mainly cover sea, lakes or mountains. In addition, we require that any chosen location should not be further away from a whole longitude, latitude combination than 50 km. Among the possible candidates, one location for each whole longitude, latitude combination was picked at random, which resulted in 138 target locations. In the combination experiments that follows, we investigate whether we are able to predict the time-to-hard-freeze from October 1 for the years 1999-2018. For some of the 138 locations, the first hard freeze of the season often occurs before October 1. Locations where this happened for more than 5 out of the 20 study years, were removed from the analysis. This left us with the 103 target locations shown in Figure \ref{fig:mean_obs_ttf}.

\subsubsection{Post-processing}
\label{app:PostProc}
To ensure that the forecasted temperatures better match the temperatures observed at the target locations, the seasonal and subseasonal forecasts are post-processed in the same manner as in \cite{roksvaag2023}, by adjusting the forecast means and standard deviations relative to historical means and standard deviations from SeNorge. More specifically, let $f_{s,t,y}^{k,m}$ be the forecasted mean daily temperature for year $y$, location $s$, forecasting system $k$, ensemble member $m$ and day number $t$ after September 1. We here use the forecast from the grid node that is closest to the SeNorge location we are interested in. The forecast  $f_{s,t,y}^{k,m}$ is first standardized with respect to its own historical mean temperature $\tilde{\mu}^k_{s,t}$ and standard deviation $\tilde{\sigma}^k_{s,t}$, i.e.
\begin{equation}\label{eq:anomaly}
	\tilde{f}_{s,t,y}^{k,m}=\frac{(f_{s,t,y}^{k,m}-\tilde{\mu}^k_{s,t})}{\tilde{\sigma}^k_{s,t}},
\end{equation}
where $\tilde{\mu}^k_{s,t}$ and $\tilde{\sigma}^k_{s,t}$ are computed based on temperature hindcasts for system $k$, day $t$ and location $s$, with the target year $y$ being left out of the sample. There are 20 hindcast years for the subseasonal forecast (1999-2018) and 30 for the seasonal (1992-2021) for computing means and standard deviations. Next, we compute the historical mean temperature $\hat{\mu}_{s,t}$ and standard deviation $\hat{\sigma}_{s,t}$ for day $t$  for our target SeNorge location $s$ based on the same historical period, and re-standardize the forecast in Eq.~\eqref{eq:anomaly} as follows:
\begin{equation}\label{eq:SFE}
    \hat{f}_{s,t,y}^{k,m}=\tilde{f}_{s,t,y}^{k,m} \cdot \hat{\sigma}_{s,t}  + \hat{\mu}_{s,t}.
\end{equation}
The post-processed  temperature forecasts $\hat{f}_{s,t,y}^{k,m}$ now have a similar mean and standard deviation to what has been historically observed for the SeNorge location in question. Evaluations in \cite{roksvaag2023} show that the above post-processing method both gives temperature forecasts, and time-to-hard-freeze forecasts, that are well-calibrated.

From the post-processed seasonal and subseasonal forecast datasets, we extract the day of the first occurrence of hard freeze after September 1 for each combination of year, location, ensemble member and forecast system. Likewise, we take out the first day of observed hard freeze for SeNorge. This is done for each year from 1999 to 2018, as these are the years covered by all three data products (additional years from 1992-2021 are only used for post-processing of the seasonal forecasts). We select December 31 as our last possible hard freeze occurrence date. The seasonal forecasts and the SeNorge data are therefore truncated on day 122 (December 31). The subseasonal forecasts, however, have a maximum lead time of 46 days (after September 30) and are therefore censored already on November 15. 

The properties of the three different data products are summarized in Table \ref{tab:dataprod}. In Figure \ref{fig:mean_obs_ttf} the study locations are shown, together with histograms of the observed and/or forecasted time-to-hard-freeze for the three datasets.

\subsubsection{Additional results}
\label{app:Results}

\begin{figure}[h] 
\begin{center}
\includegraphics[scale=0.73]{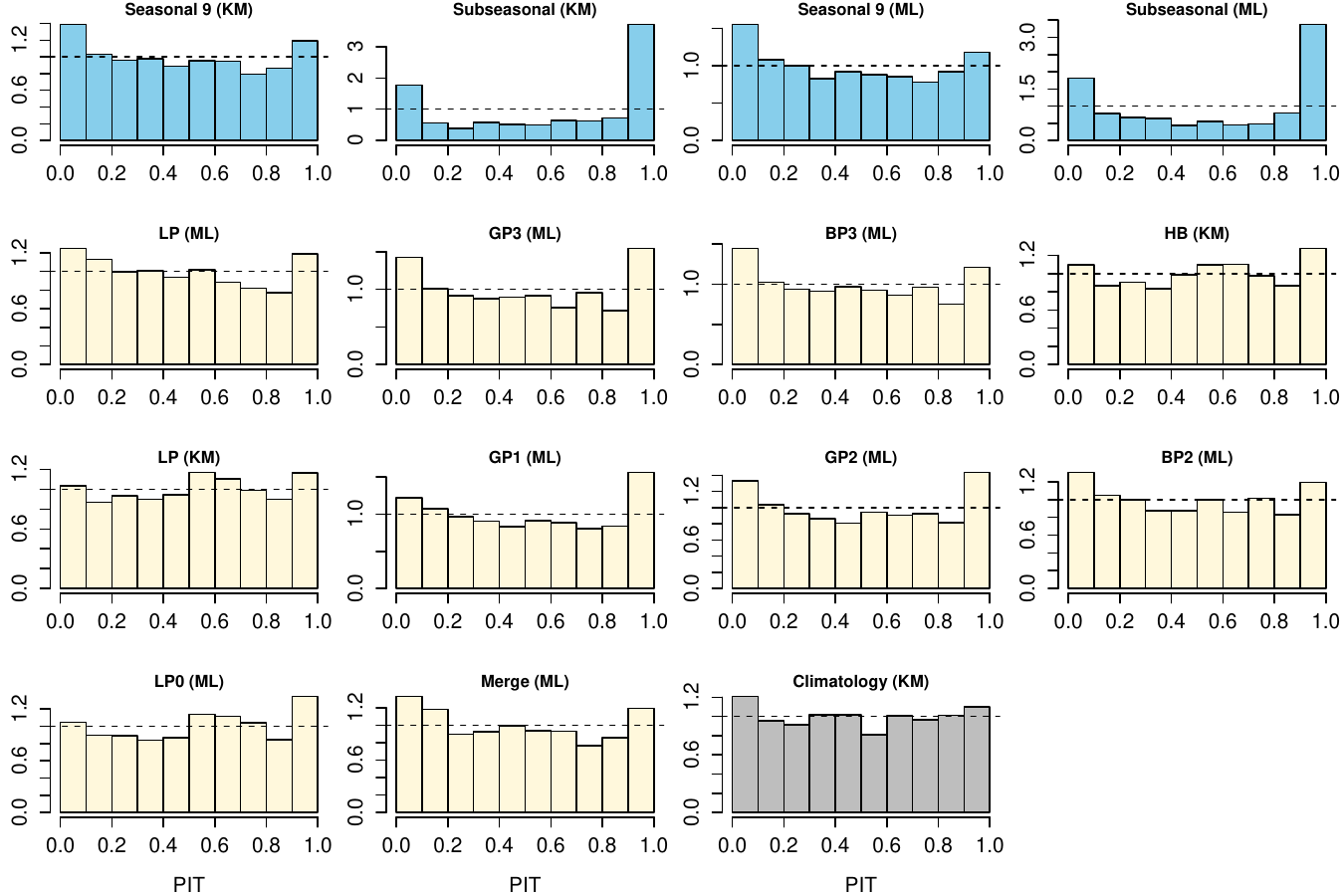}\\
\end{center} 
\caption{\small PIT histograms for the single-source forecasts (blue), 
the combination methods (yellow) and the climatology forecast (gray) for predictions of time-to-hard-freeze when raw forecasts are used.}
\label{fig:PIThist_raw}
\end{figure}

Figure \ref{fig:PIThist_raw} shows PIT histograms for the forecasts of the time hard freeze when raw ensembles are used as input. PIT histograms for the post-processed forecasts are already presented in Figure \ref{fig:pit_realdata}. Table \ref{tab:PITreal} shows the corresponding mean and standard deviations of the PIT value histograms for both raw and post-processed forecasts.

\begin{figure}[h] 
\begin{center}
\includegraphics[scale=0.8]{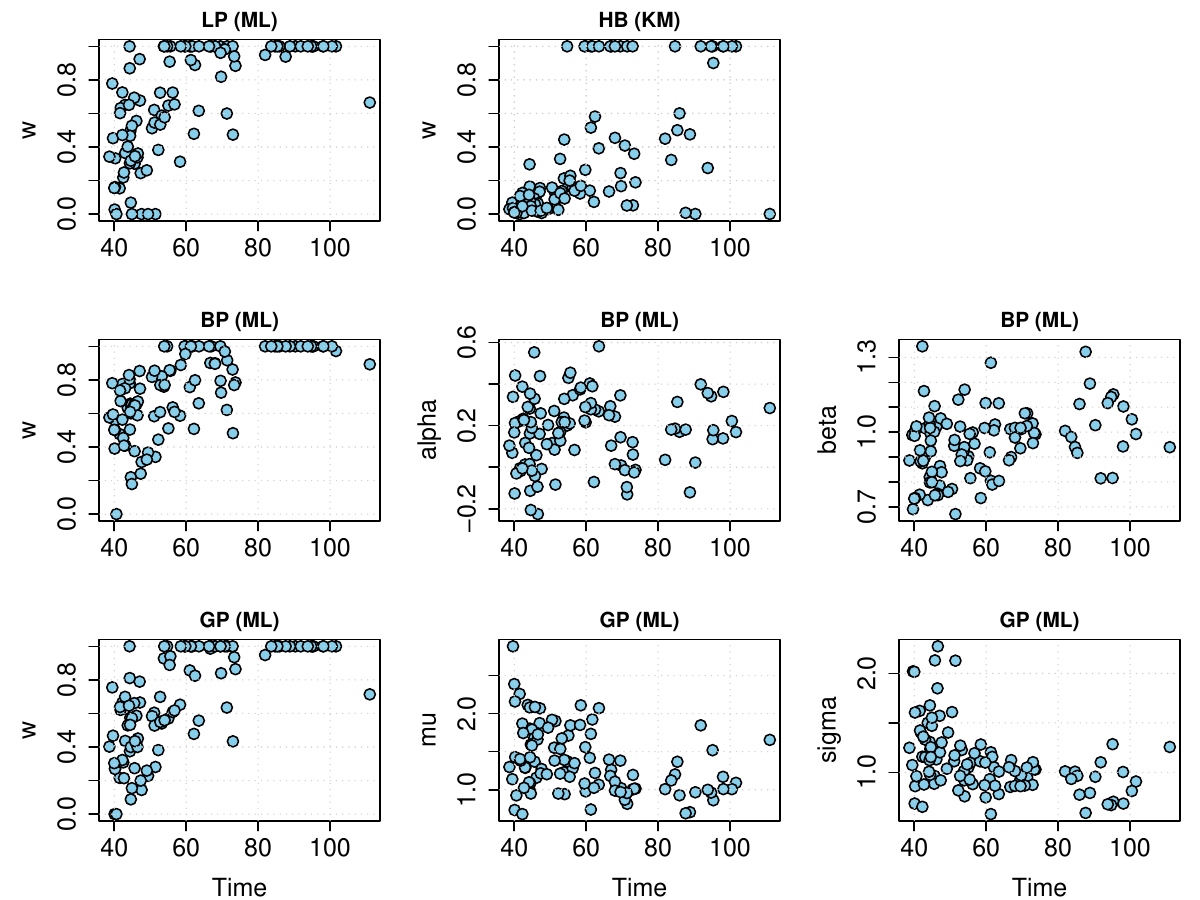}\\
\end{center} 
\caption{\small Parameter values for each main combination method, LP (ML), HB (KM), BP$_3$ (ML) and GP$_3$ (ML) when using post-processed time-to-hard-freeze forecasts. The median parameter value across the 20 study years is displayed for each location and shown as a function of the mean observed time-to-hard-freeze for the same years. }
\label{fig:parameter_values_real_pp}
\end{figure}

In Figure \ref{fig:parameter_values_real_pp}, the parameter values for the combination methods are shown for each location, as a function of the historically observed time-to-hard-freeze, for the post-processed forecasts. Figure \ref{fig:parameter_values_real_raw} shows the same, but for the raw forecasts.

\begin{table}[]\centering \caption{Mean of PIT values, with standard deviation of PIT values in parenthesis, for the time-to-hard-freeze case-study across 103 study locations and 20 years. Results for both post-processed and raw forecasts are included. For a calibrated forecast, the mean should be close to 0.5, with a standard deviation around 0.29.}\label{tab:PITreal}
\begin{tabular}{lll}\hline
Method       & Post-processed & Raw     \\
\hline
Clim (KM)    & 0.50 (0.31) & 0.50 (0.32) \\
Seas9 (KM)   & 0.53 (0.30) & 0.48 (0.26) \\
Subseas (KM) & 0.58 (0.32) & 0.61 (0.27) \\
Seas 9 (ML)  & 0.52 (0.30) & 0.47 (0.26) \\
Subseas (ML) & 0.53 (0.32) & 0.57 (0.27) \\
LP (ML)      & 0.53 (0.29) & 0.48 (0.27) \\
GP$_3$ (ML)  & 0.48 (0.31) & 0.51 (0.32) \\
BP$_3$ (ML)  & 0.48 (0.31) & 0.48 (0.31) \\
HB (KM)      & 0.55 (0.29) & 0.52 (0.27) \\
LP (KM)      & 0.55 (0.29) & 0.51 (0.26) \\
GP$_1$ (ML)  & 0.53 (0.30) & 0.50 (0.29) \\
GP$_2$ (ML)  & 0.53 (0.30) & 0.50 (0.29) \\
BP$_2$ (ML)  & 0.53 (0.30) & 0.49 (0.28)\\ 
LP$_0$ (ML) & 0.53 (0.28) & 0.52 (0.30) \\
Merge (ML) & 0.52 (0.29) &  0.48 (0.30)   \\

\hline
\end{tabular}
\end{table}

\begin{figure}[h] 
\begin{center}
\includegraphics[scale=0.8]{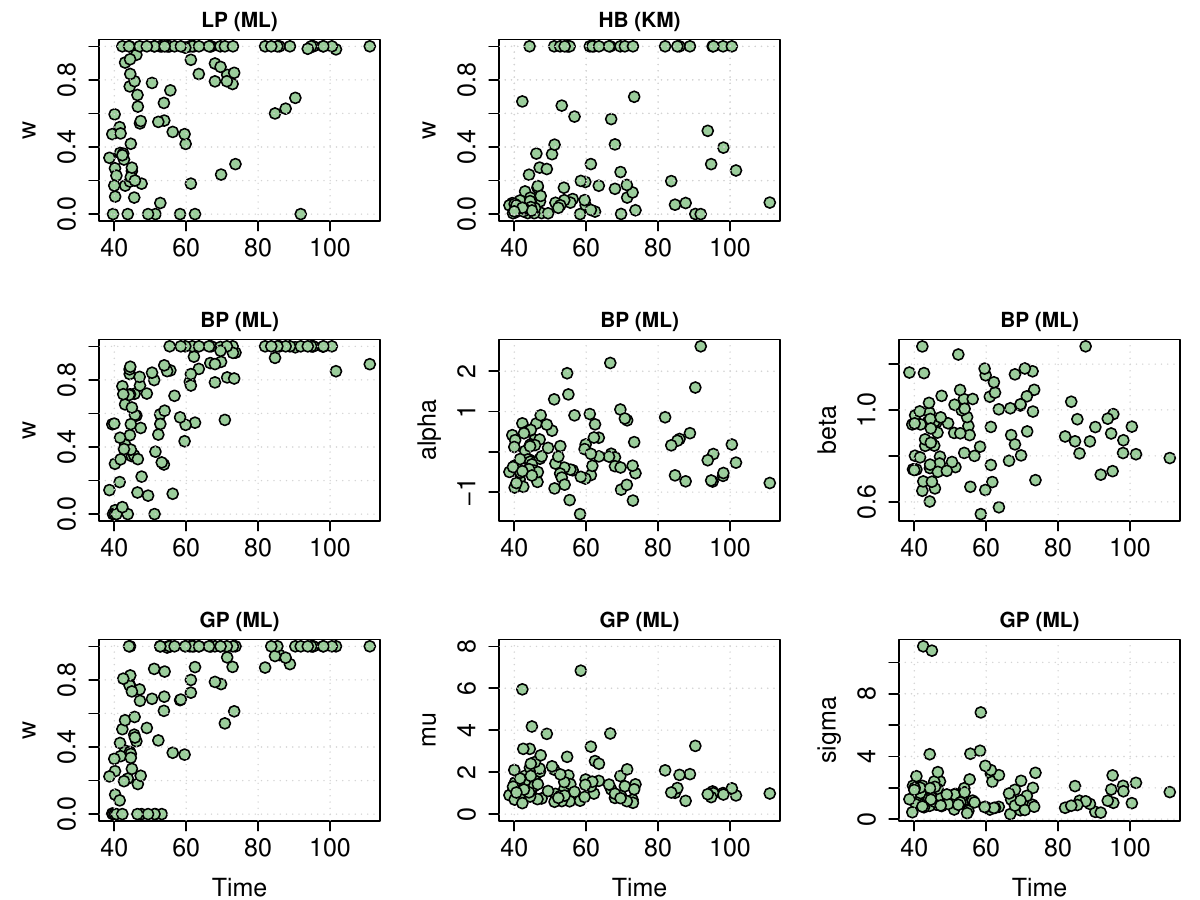}\\
\end{center} 
\caption{\small Parameter values for each main combination method, LP (ML), HB (KM), BP$_3$ (ML) and GP$_3$ (ML) when using raw time-to-hard-freeze forecasts. The median parameter value across the 20 study years is displayed for each location and shown as a function of the mean observed time-to-hard-freeze for the same years. For GP (ML) there was also one mu-value that around 25 that is not included in the picture.}
\label{fig:parameter_values_real_raw}
\end{figure}

\end{document}